\newtheorem{theorem}{Theorem}
\newtheorem{lemma}[theorem]{Lemma}
\newtheorem{proposition}[theorem]{Proposition}
\newtheorem{corollary}[theorem]{Corollary}
\theoremstyle{definition}
\newtheorem{definition}[theorem]{Definition}
\DeclarePairedDelimiter{\ceil}{\lceil}{\rceil}
\DeclarePairedDelimiter{\floor}{\lfloor}{\rfloor}
\DeclarePairedDelimiter{\sprod}{\langle}{\rangle}
\DeclarePairedDelimiter{\abs}{|}{|}
\DeclareMathOperator{\CX}{CX}
\DeclareMathOperator{\Phase}{P}
\DeclareMathOperator{\len}{len}
\DeclareMathOperator{\sgn}{sgn}
\DeclareMathOperator{\SPA}{SPA}
\DeclareMathOperator{\NPA}{NPA}
\DeclareMathOperator{\WPA}{WPA}
\DeclareMathOperator{\GRAY}{GRAY}
\DeclareMathOperator{\SWAP}{SWAP}
\let\card\abs
\title[Diagonal operator decomposition on restricted topologies]{Diagonal operator decomposition\\on restricted topologies\\via enumeration of quantum state subsets\vspace{-1em}}
\author{Jan Tułowiecki$^{1,2}$, Łukasz Czerwiński, Konrad Deka$^{1,3}$, Jan Gwinner, Witold Jarnicki$^1$, Adam Szady$^1$}
\address{$^1$BEIT, Kraków, Poland \\
$^2$Department of Theoretical Computer Science, Jagiellonian
University, Kraków, Poland \\
$^3$Institute of Mathematics, Jagiellonian University, Kraków, Poland \\
\textup{\texttt{\{j.tulowiecki, konrad, witek, adam\}@beit.tech} }
\\[8pt]}
\begin{document}

\maketitle
\begin{abstract}
Various quantum algorithms require usage of arbitrary diagonal operators as subroutines. For their execution on a physical hardware, those operators must be first decomposed into target device's native gateset and its qubit connectivity for entangling gates. Here, we assume that the allowed gates are exactly the $\CX$ gate and the parameterized phase gate. We introduce a framework for the analysis of $\CX$-only circuits and through its lens provide solution constructions for several different device topologies (fully-connected, linear and circular). We also introduce two additional variants of the problem. Those variants can be used in place of exact decomposition of the diagonal operator when the circuit following it satisfies a set of prerequisites, enabling further reduction in the $\CX$ cost of implementation. Finally, we discuss how to exploit the framework for the decomposition of a particular, rather than general, diagonal operator.
\end{abstract}

\section{Introduction}

To execute quantum algorithms on physical devices, the high-level description of the quantum circuit
must be decomposed into the (quite limited) hardware native gateset. 
The quality of a decomposition is measured by the number of gates comprising the circuit - the lower the better.
The need for such decompositions has led researchers to formulate and solve various, either more general-purpose
or more specialized variants and aspects of the problem.
The task may depend on the circuit that needs to be decomposed, the hardware native gateset, 
number of qubits available to use as ancillas,
the topology (i.e. connectedness graph of the machine), and
the specifics of the gateset (e.g. on NISQ machines the number of CX gates could be an optimization goal, while
on fault-tolerant architectures the number of T gates would be a better optimization goal).
On one end of the spectrum, there exist numerous software packages (e.g. qiskit, tket \cite{Sivarajah_2020}) capable of performing such 
decompositions in a very general setting, using heurestic optimizations to minimize the number of gates.
On the other end of spectrum, and more in line with the topic of this article, researchers have studied singled 
out often recurring quantum primitives and focused on optimizing those.
For example, it is known that a Toffoli gate requires 6 CNOT gates, and this number is optimal under certain assumptions \cite{Shende_2008}.
The Toffoli gate also requires at least 4 T gates \cite{Howard_2017}. 
More generally, multiple-controlled Toffoli gate was also analyzed by various authors (\cite{Maslov_2016}, add others0.
Another well-studied example is the circuit for addition of two n-bit numbers \cite{Gidney2018halvingcostof, Perez_2017, Vedral_1996}.

Various quantum algorithms, such as Grover search or QAOA, require implementation of (arbitrary) diagonal operators. Inevitably, to execute those operators on the physical devices, they must be first decomposed into the hardware native gateset. The decomposition strategy may be greatly impacted by the qubit connectivity of the target device. The problem has recieved some attention and had previously been studied for a fully connected topology and a linear topology to some extent. \cite{10.5555/2011670.2011675,Shuch_2003, Welch_2014} provide vital results regarding theoretical lower bounds and describing general method of solving the problem in such case.
In this paper we plan to further extend them by designing a method of constructing generic diagonal circuits with lower connectivity of qubits, which may prove to be of more use in real-life situations.
In case of \cite{10.5555/2011670.2011675}, we also greatly extend the class of decomposable gates to diagonal gates with many more degrees of freedom.

In the following section we give a brief intuition about the structure of a solution in general case.
Later on we introduce three different variants of the problem (independent from target topologies) and describe the motivation behind them --- variants called $\WPA$ and $\SPA$ and their applications in quantum algorithms are a novelty and to the best of our knowledge have not been studied before by anyone.
Then we present comprehensive description of a general solution to decomposition problem on fully connected topology.
Then we describe the main goal of the paper, which is the solution to the problem on more restricted topologies --- i.e., a method for generating circuits having required properties.
Finally we present methods of exploiting symmetries contained in a particular diagonal operators to obtain lower $\CX$ count to the solutions in the general framework.

\subsection{Problem description}
Our primary goal is to devise a generic circuit $Q(\hat\theta)$ (for $\hat\theta=(\theta_1,\ldots,\theta_p)\in\mathbb{R}^p$) on $n$ qubits consisting solely of $\CX$ gates and exactly $p$ phase gates\footnote{In this paper, a phase gate is defined as $\Phase(\theta) = \operatorname{diag}(1, \, \exp(i \theta))$.} (parametrized by angles $\theta_1, \ldots, \theta_p$ respectively) such that, by adjusting $\hat\theta$, we can implement arbitrary diagonal operator (we will call $Q$ a \emph{generic diagonal circuit})\footnote{$Q$ being a generic circuit means here that the placement of all gates is already fixed; the only thing we can manipulate are angles parametrizing phase gates.}.
That is, given a diagonal operator on $n$ qubits (i.e., a diagonal matrix of dimension $2^n \times 2^n$) of a form $O = \sum_{k=0}^{2^n - 1} \exp(\alpha_k \cdot i)\ket{k} \bra{k}$ we can compute angles $\hat\theta = \hat\theta(\alpha_0, \ldots, \alpha_{2^n-1})$ such that $Q(\hat\theta)$ realizes $O$ up to a global phase.

From the last condition we can assume without a loss of generality that $\alpha_0 = 0$, since we can always remove $O$'s global phase by multiplying the matrix by $\exp(-\alpha_0\cdot i)$.
Therefore our problem has $2^n - 1$ real-valued degrees of freedom --- thus any valid $Q$ must have at least $2^n-1$ phase gates, as every phase gate introduces only one real parameter.
\section{General case}

Let us consider how circuits containing only phase and $\CX$ gates behave on computational basis states, which are $n$-bit long bitmasks.
Obviously, the phase gate does not change the aforementioned bitmask and $\CX$ gate changes it according to the rule presented in the \Cref{fig:cx-gate}.

\begin{figure}[H]
\begin{quantikz}[row sep={0.6cm,between origins}, column sep=0.3cm]
\lstick{$a$} & \ctrl{1} & \qw & \rstick{$a$} \\
\lstick{$b$} & \targ{}  & \qw & \rstick{$a \oplus b$} 
\end{quantikz}
\caption{Application of the $\CX$ gate is equivalent to changing the state of the target wire into the XOR (i.e., sum modulo 2) of the states before the gate application.}
\label{fig:cx-gate}
\end{figure}
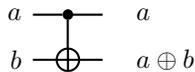

To analyze the behavior of phase gates in the circuit, it will be convenient to assign to every wire $k\in\{0,\dots,n-1\}$ at every point in time $t\in\mathbb{N}$, a nonempty subset of qubits such that its state is the sum of the corresponding bits from the original bitmask modulo 2.
\begin{definition}[Wire signature]
    For every wire $k\in\{0,\ldots,n-1\}$ let the \emph{signature of wire $k$ at time $t\in\mathbb{N}$} be a vector $v_k(t)\in\mathbb{F}_2^n$ such that:
    \[
        v_k(t) = \begin{cases}
            e_k\text{,} & \text{if $t=0$,}\\
            v_k(t-1) + v_l(t-1)\text{,} & \text{if $t>0$ and there is a gate $\CX(l, k)$ at time $t-1$,}\\
            v_k(t-1)\text{,} & \text{otherwise,}
        \end{cases}
    \]
    where $e_k$ denotes $k$-th vector from the standard basis of $\mathbb{F}_2^n$.
\end{definition}
Note that for initial state from the computational basis, state of a qubit $k$ at time $t$ is the inner product of $v_k(t)$ and the bitmask representing initial state (after performing a natural embedding into $\mathbb{F}_2^n$).

\begin{lemma}\label{lem:vkt-upper-bound}
    The number of different wire signatures achieved in a circuit $Q$, i.e. $\card{\{v_k(t): 0\leq k\leq n-1, 0\leq t\leq\len(Q)\}}$ is at most $n$ plus the number of $\CX$ gates in circuit $Q$.
\end{lemma}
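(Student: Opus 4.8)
The plan is to follow the evolution of the \emph{set} of signatures seen so far and to show that this set grows by at most one vector per $\CX$ gate. For $0 \le t \le \len(Q)$ put
\[
    S_t = \bigl\{\, v_k(t') : 0 \le k \le n-1,\ 0 \le t' \le t \,\bigr\},
\]
so that the quantity to be bounded is exactly $\card{S_{\len(Q)}}$. Writing $g_t$ for the number of $\CX$ gates acting at time $t$ and $c_t = \sum_{s=0}^{t-1} g_s$ for the number of $\CX$ gates occurring strictly before time $t$, I would prove by induction on $t$ that $\card{S_t} \le n + c_t$.

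The base case is immediate: $S_0 = \{e_0, \ldots, e_{n-1}\}$ has exactly $n$ elements, since the standard basis vectors are pairwise distinct, and $c_0 = 0$. For the inductive step, compare time $t-1$ with time $t$. By the signature recurrence, $v_k(t) = v_k(t-1)$ unless wire $k$ is the target of some $\CX$ gate acting at time $t-1$, and each such gate has a single target; hence the only signatures in $S_t$ that need not already lie in $S_{t-1}$ are the (at most $g_{t-1}$) vectors $v_k(t)$ with $k$ a $\CX$-target at time $t-1$. Therefore $\card{S_t} \le \card{S_{t-1}} + g_{t-1} \le n + c_{t-1} + g_{t-1} = n + c_t$. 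Evaluating at $t = \len(Q)$, and noting that $c_{\len(Q)}$ is precisely the total number of $\CX$ gates in $Q$, gives the claim. (Under the convention that each time step contains at most one gate, $g_{t-1} \le 1$ and the inductive step is even more transparent.)

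I do not expect a genuine obstacle here: the whole content is the elementary observation that a $\CX$ gate rewrites only its target wire's signature, so it can introduce at most one previously unseen vector, while phase gates and idle wires introduce none. The only thing requiring care is bookkeeping — pinning down what ``a gate at time $t$'' means and whether several disjoint $\CX$ gates may share a layer, checking that the index range for $t$ and the value of $\len(Q)$ are consistent with the recurrence in the definition, and recording that $S_t$ is monotone in $t$ (signatures are never deleted once created), so that passing to the final time genuinely accounts for every signature ever achieved in $Q$.
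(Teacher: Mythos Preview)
Your proposal is correct and follows essentially the same approach as the paper's proof: the key observation in both is that the $n$ standard basis vectors are present at time $0$ and that each $\CX$ gate changes only its target wire's signature, hence can introduce at most one new vector. Your version is simply a more formally packaged induction on $t$ via the sets $S_t$, whereas the paper states the same idea in two sentences.
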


\begin{proof}
Recall that all elements of the standard basis are introduced on the corresponding wires at time 0, before any $\CX$ is executed.
Any other signature must be achieved after execution of some $\CX$ gate.
It is clear that each $\CX$ introduces at most one new wire signature as the only wire changing its signature is the one corresponding to the target qubit.
\end{proof}


\begin{proposition}\label{thm:simplify-phase-gates}
For any circuit consisting only of phase and $\CX$ gates there exists an equivalent circuit built using the same gateset and having the same number of $\CX$ gates such that for every non-zero $v\in\mathbb{F}_2^n$ there is at most one application of phase gate to some qubit $k$ at time $t$ such that $v_k(t)=v$.
\end{proposition}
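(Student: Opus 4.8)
The plan is to show that a phase--$\CX$ circuit is determined by only two pieces of data and then to rebuild it with at most one phase gate per signature. The two pieces are the overall $\mathbb{F}_2$-linear permutation of the computational basis effected by the $\CX$ gates, together with, for each nonzero $v\in\mathbb{F}_2^n$, the sum of the angles of all phase gates sitting on a wire whose signature is $v$. Concretely, I would first establish by induction on circuit length the normal form
\[
 Q\ket{x} \;=\; \exp\!\Bigl(i\sum_{g}\theta_g\,\langle v_g,x\rangle\Bigr)\ket{Mx},
\]
where $g$ ranges over the phase gates of $Q$, $\theta_g$ and $v_g\in\mathbb{F}_2^n\setminus\{0\}$ are the angle and the wire signature at the location of $g$, and $M$ is the ordered product of the elementary $\mathbb{F}_2$-matrices of the $\CX$ gates (whose partial product up to time $t$ has rows $v_0(t),\dots,v_{n-1}(t)$). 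The induction uses the observation recorded just after the definition of wire signatures: on input $x$ the state of wire $k$ at time $t$ is $\langle v_k(t),x\rangle$, so a $\CX$ gate merely left-multiplies $M$ by an elementary matrix while a phase gate of angle $\theta$ on wire $k$ at time $t$ multiplies the amplitude by $\exp(i\theta\langle v_k(t),x\rangle)$ --- exactly one new term in the exponent.

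Grouping the exponent by signature, put $f(v):=\sum_{g:\,v_g=v}\theta_g$; since for fixed $x$ all phase gates with the same $v$ contribute the identical bit $\langle v,x\rangle$, the exponent equals $\sum_{v\neq 0}f(v)\langle v,x\rangle$, so the operator realized by $Q$ depends only on $M$ and $f$. The rebuilt circuit then keeps every $\CX$ gate of $Q$ exactly where it is --- preserving $M$, the $\CX$ count, and the value of $v_k(t)$ at every position --- and, for each $v$ with $\Phase(f(v))\neq I$, inserts a single phase gate of angle $f(v)$ at some position $(k,t)$ where $Q$ had a phase gate with $v_k(t)=v$ (such a position exists because $f(v)\neq 0$ forces some $g$ with $v_g=v$), discarding all remaining phase gates. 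Phase gates landing in the same time slot sit on distinct wires, hence commute, so this is unambiguous; and the new circuit has the same $M$ and the same $f$, realizes the same operator, keeps the same number of $\CX$ gates, and uses each nonzero signature at most once.

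The only real content is the normal form of the first paragraph --- in particular that phase contributions accumulate additively and independently of how phase and $\CX$ gates are interleaved; everything after is bookkeeping. The point to watch is that the slot chosen for each merged phase gate must genuinely carry signature $v$ in the rebuilt circuit, which is why I reuse a slot that already hosted a phase gate with signature $v$ in $Q$ rather than trying to designate a slot canonically.
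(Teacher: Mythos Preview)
Your proposal is correct and follows essentially the same approach as the paper: both observe that the operator realized depends only on the $\CX$ skeleton together with the sum of phase angles per wire signature, and then merge all phase gates sharing a signature into a single one placed at an existing slot of that signature. The paper phrases the merging as an iterative pairwise process while you rebuild the circuit in one pass, but the underlying argument is identical.
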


\begin{proof}
We will construct such circuit by modifying the original one.
Note that, due to the choice of gateset, for any input computational basis state $\ket{b}$ the final state is of form $\exp(\alpha(b)\cdot i)\ket{f(b)}$, where $f$ is an isomorphism depending only on the placement of $\CX$es and $\alpha(b)$ is the sum of $\theta$s over all applications of gates $\Phase(\theta)$ to qubit $k$ at time $t$ such that $\sprod{v_k(t), b} = 1$.
Therefore, if there are two different applications of phase gates $\Phase(\theta_1), \Phase(\theta_2)$ to qubits $k_1,k_2$ at times $t_1,t_2$ (respectively) such that $v_{k_1}(t_1) = v_{k_2}(t_2)$, then they both contribute to the phase shift of the same subset of basis states --- thus we can remove the second phase gate and replace the first one with $\Phase(\theta_1 + \theta_2)$, keeping all final phase shifts intact.
This process decreases the total number of phase gates, so by repeating it we will eventually end up with an equivalent circuit satisfying the condition.
It remains to notice that the described modification does not change the number (as well as relative placement) of $\CX$ gates.
\end{proof}

\begin{corollary}\label{cor:phase-gate-count}
    Any circuit constructed in \Cref{thm:simplify-phase-gates} uses at most $2^n-1$ phase gates.
    In particular, applying \Cref{thm:simplify-phase-gates} to a generic diagonal circuit yields a generic diagonal circuit with exactly $2^n-1$ phase gates.
\end{corollary}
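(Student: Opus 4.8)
The plan is to get the upper bound $2^n-1$ from a pigeonhole argument over wire signatures, and then upgrade it to an equality for generic diagonal circuits by invoking the degrees-of-freedom lower bound already recorded in the problem description (``any valid $Q$ must have at least $2^n-1$ phase gates''). Before either step I would record one small auxiliary fact: wire signatures never vanish. Indeed, at $t=0$ we have $(v_0(0),\dots,v_{n-1}(0))=(e_0,\dots,e_{n-1})$, a basis of $\mathbb{F}_2^n$; and applying a gate $\CX(l,k)$ replaces $v_k$ by $v_k+v_l$ while leaving all other signatures unchanged, which is an invertible elementary operation over $\mathbb{F}_2$. By induction on $t$, the tuple $(v_0(t),\dots,v_{n-1}(t))$ is a basis of $\mathbb{F}_2^n$ for every $t$ (this is just the statement that a $\CX$-only circuit implements an invertible linear map on $\mathbb{F}_2^n$), so in particular $v_k(t)\neq 0$ for all $k$ and $t$.

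With this in hand the upper bound is immediate. Let $Q'$ be any circuit produced by \Cref{thm:simplify-phase-gates}. Every phase gate of $Q'$ is applied to some qubit $k$ at some time $t$, and its signature $v_k(t)$ is a \emph{nonzero} vector of $\mathbb{F}_2^n$ by the previous paragraph. By the defining property of $Q'$, for each nonzero $v\in\mathbb{F}_2^n$ there is at most one phase gate of $Q'$ whose signature equals $v$. Hence the phase gates of $Q'$ inject into $\mathbb{F}_2^n\setminus\{0\}$, a set of size $2^n-1$, so $Q'$ has at most $2^n-1$ phase gates.

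For the ``in particular'' clause I would argue that applying \Cref{thm:simplify-phase-gates} to a generic diagonal circuit again yields a generic diagonal circuit. The construction in the proof of \Cref{thm:simplify-phase-gates} merges phase gates purely according to the (angle-independent) pattern of coinciding signatures: it partitions the original set of phase-gate slots into groups and replaces each group by a single phase gate carrying the sum of the angles in that group. Thus the parameter vector $\hat\theta'$ of $Q'$ is obtained from $\hat\theta$ by a fixed linear map $\mathbb{R}^p\to\mathbb{R}^{p'}$ that is surjective (each coordinate of $\hat\theta'$ is a sum over one block of a partition, so it can be set to any prescribed value, e.g.\ by putting that value on one slot of the block and $0$ on the others), and for every $\hat\theta$ the circuits $Q(\hat\theta)$ and $Q'(\hat\theta')$ implement the same operator. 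Consequently $Q'$ realizes every diagonal operator that $Q$ does, i.e.\ $Q'$ is itself a generic diagonal circuit, so it must use at least $2^n-1$ phase gates. Together with the upper bound from the previous paragraph this forces $p'=2^n-1$.

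The only genuinely delicate point is the ``generic is preserved'' step: one has to be sure that merging phase gates does not kill any degree of freedom, which is exactly the surjectivity of the reparametrization map $\hat\theta\mapsto\hat\theta'$. The nonvanishing of signatures, while needed to land on the clean value $2^n-1$ rather than $2^n$, is a routine induction, and the pigeonhole step is then trivial.
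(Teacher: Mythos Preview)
Your proof is correct and matches the paper's intended argument; the paper states this corollary without an explicit proof, and you have correctly filled in the two details it leaves implicit. In particular, your inductive argument that $(v_0(t),\dots,v_{n-1}(t))$ is always a basis of $\mathbb{F}_2^n$ is exactly the content of \Cref{lem:vkt-linear-independence} (stated slightly later in the paper), and your surjectivity argument for the reparametrization $\hat\theta\mapsto\hat\theta'$ is the right way to justify that the simplified circuit remains a generic diagonal circuit.
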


\begin{corollary}\label{cor:cx-lower-bound}
    The number of $\CX$ gates necessary to implement a generic diagonal circuit on $n$ qubits is the same as the minimal number of $\CX$ gates necessary to enumerate $\mathbb{F}_2^n\setminus\{0\}$ via $v_k(t)$, which is at least $2^n - n - 1$.
\end{corollary}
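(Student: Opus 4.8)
The plan is to use \Cref{thm:simplify-phase-gates} and \Cref{cor:phase-gate-count} to recast the synthesis problem as a combinatorial statement about $\CX$-circuits and the sets of wire signatures they produce, proving the two minimum $\CX$-counts equal by a pair of opposite inequalities, and then to read the bound $2^n-n-1$ directly off \Cref{lem:vkt-upper-bound}.

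For the inequality ``min over generic diagonal circuits $\ge$ min over enumerations'', I would take an arbitrary generic diagonal circuit and apply \Cref{thm:simplify-phase-gates} and \Cref{cor:phase-gate-count}, obtaining an equivalent generic diagonal circuit with the same number of $\CX$ gates and exactly $2^n-1$ phase gates, no two of which sit at a time step with the same signature. A phase gate placed where the signature is $0$ affects no basis state, so a quick dimension count settles the structure: the realizable phase functions $b\mapsto\alpha(b)$ span a space whose dimension is at most the number of phase gates with nonzero signature, while genericity forces that dimension to be $2^n-1$; hence all $2^n-1$ phase gates sit where the signature is nonzero, and being pairwise distinct they exhaust $\mathbb{F}_2^n\setminus\{0\}$. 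Thus the underlying $\CX$-circuit already realizes every nonzero vector as a wire signature, with the same $\CX$ count; and since the circuit implements a diagonal operator it fixes every computational basis state, so its $\CX$-part computes the identity permutation.

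For the reverse inequality, start from a $\CX$-circuit $C$ that enumerates $\mathbb{F}_2^n\setminus\{0\}$ and computes the identity permutation, and for each nonzero $v$ insert one phase gate $\Phase(\theta_v)$ at the first time step whose signature is $v$. Exactly as in the proof of \Cref{thm:simplify-phase-gates}, on the basis state indexed by $b$ this circuit outputs $\exp\big(i\sum_{v\ne 0}\theta_v\sprod{v,b}\big)\ket{b}$, so the realizable phase functions form the real-linear span of the maps $\chi_v\colon b\mapsto\sprod{v,b}$, $v\ne 0$. Writing $\chi_v=\tfrac12(\psi_0-\psi_v)$ with $\psi_v(b)=(-1)^{\sprod{v,b}}$ and invoking linear independence of the $2^n$ characters $\psi_v$, the $2^n-1$ functions $\chi_v$ are linearly independent and vanish at $b=0$, hence span the entire $(2^n-1)$-dimensional space of phase functions with $\alpha(0)=0$; so $C$ extends to a generic diagonal circuit without any extra $\CX$ gates. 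The two minima therefore coincide, and \Cref{lem:vkt-upper-bound} finishes the job: the $n$ initial signatures $e_0,\dots,e_{n-1}$ grow by at most one per $\CX$ gate, so realizing all $2^n-1$ nonzero vectors costs at least $(2^n-1)-n=2^n-n-1$ of them.

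I expect the one genuinely delicate point to be the bookkeeping around the identity permutation: a bare enumeration of $\mathbb{F}_2^n\setminus\{0\}$ by wire signatures need not leave the $\CX$-network in the identity configuration, so this requirement has to be built into the combinatorial problem (it is precisely what lets phase gates be re-inserted in the reverse direction), and one must verify that the first inequality above genuinely supplies it. The only other non-routine ingredient is the independence of the characters $\psi_v$ — equivalently, invertibility of the truncated Walsh--Hadamard matrix — which is classical.
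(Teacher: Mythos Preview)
Your argument for the forward direction---a generic diagonal circuit yields a $\CX$-circuit enumerating all of $\mathbb{F}_2^n\setminus\{0\}$ with the same $\CX$ count, whence the bound $2^n-n-1$ via \Cref{lem:vkt-upper-bound}---is essentially the paper's proof, which invokes \Cref{thm:simplify-phase-gates} and \Cref{cor:phase-gate-count} in the same way.

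The paper's proof, however, stops there: it establishes only $\min(\text{generic})\ge\min(\text{enumerate})\ge 2^n-n-1$ and does not argue the reverse inequality at all, leaving the ``is the same as'' clause to the subsequent discussion of problem variants and to \Cref{sec:hat-theta}. You go further and supply that direction, and in doing so you put your finger on a genuine subtlety the paper's statement glosses over: a bare $\CX$-circuit that enumerates $\mathbb{F}_2^n\setminus\{0\}$ need not leave the wires in the identity configuration, and without that constraint the re-insertion of phase gates produces a diagonal-times-permutation, not a diagonal operator. Read literally, then, the corollary's equality is false (on the full topology with $n=3$, pure enumeration costs $4$ gates while a genuinely diagonal circuit needs $6$), and your insistence on building the identity constraint into the combinatorial problem is exactly right---it is what the paper later calls NPA. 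Your character-independence argument for the reverse direction is correct and is morally the same computation carried out in \Cref{sec:hat-theta}. One minor simplification: the case of a phase gate sitting at signature $0$ never arises, since \Cref{lem:vkt-linear-independence} forces every wire signature to be nonzero at every time.
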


\begin{proof}
    Let $Q$ be any generic diagonal circuit and $Q'$ be the equivalent circuit obtained by applying \Cref{thm:simplify-phase-gates}.
    Let $c$ be the number of $\CX$ gates in $Q'$ (as well as in $Q$).
    From \Cref{cor:phase-gate-count} we know that $Q'$ uses exactly $2^n-1$ phase gates --- therefore $Q'$ must achieve all possible (non-zero) wire signatures, and from \Cref{lem:vkt-upper-bound} we have $c\geq 2^n-n-1$.
\end{proof}

\begin{figure}[!h]
\begin{quantikz}[row sep={0.6cm,between origins}, column sep=0.3cm]
\lstick{$v_0(0) = e_0$} & \ctrl{1} & \qw       & \targ{}   & \ctrl{1}  & \qw & \rstick{$v_0(4) =e_0 + e_2 + e_3$} \\
\lstick{$v_1(0) = e_1$} & \targ{}  & \qw       & \qw       & \targ{}   & \qw & \rstick{$v_1(4) =e_1 + e_2 + e_3$} \\
\lstick{$v_2(0) = e_2$} & \qw      & \targ{}   & \ctrl{-2} & \qw       & \qw & \rstick{$v_2(4) =e_2 + e_3$} \\
\lstick{$v_3(0) = e_3$} & \qw      & \ctrl{-1} & \qw       & \qw       & \qw & \rstick{$v_3(4) = e_3$}
\end{quantikz}
\caption{State of $k$-th qubit after execution of a $\CX$ circuit $Q$ is an inner product of original state's bitmask and $v_k(\len(Q))$. In this example, state $\ket{b} = \ket{1011}$ will be changed into $\ket{\sprod{v_0(4), b}\sprod{v_1(4), b}\sprod{v_2(4), b}\sprod{v_3(4), b}} = \ket{1001}$, and state $\ket{0010}$ will be changed into $\ket{1110}$.}
\end{figure}
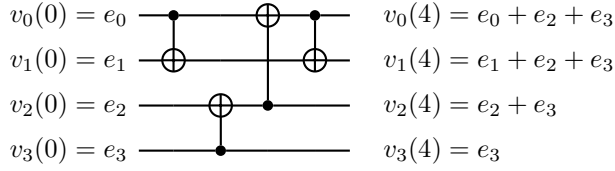

In the remainder of this paper we will focus on the problem of achieving \emph{shortest} generic diagonal circuit.
From \Cref{cor:phase-gate-count} we already know that the number of phase gates cannot be further optimized; our only concern left is making sure that every $v\in\mathbb{F}_2^n\setminus\{0\}$ will appear (as a wire signature) at some point in the circuit.
Based on this conclusion, we can reformulate the problem as: \emph{devise the shortest $\CX$ circuit on $n$ wires that enumerates $\mathbb{F}_2^n\setminus\{0\}$ via wire signatures}.\footnote{For a method of computing $\hat\theta$, refer to \Cref{sec:hat-theta}.}
Therefore, since from now on we will only consider circuits consisting of $\CX$ gates, $\len(Q)$ will naturally denote the number of $\CX$ gates in a circuit $Q$ --- in patricular we assume that at every time $t=0,\ldots,\len(Q)-1$ there will be exactly one $\CX$ gate applied.

Let us state an important property of all wire signatures at a fixed time $t$:
\begin{lemma}\label{lem:vkt-linear-independence}
    For any generic diagonal circuit $Q$ and any time $t$, vectors $v_k(t)$ ($k=0,\ldots,n-1$) form a basis of $\mathbb{F}_2^n$.
\end{lemma}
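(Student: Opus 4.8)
The plan is to induct on the time $t$, with the key observation that a single $\CX$ gate acts on the $n$-tuple of wire signatures $(v_0(t),\ldots,v_{n-1}(t))$ as an invertible $\mathbb{F}_2$-linear transformation, and invertible linear maps send bases to bases.

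For the base case $t=0$ I would simply invoke the definition of the wire signature: $v_k(0)=e_k$ for every $k$, so the tuple at time $0$ is precisely the standard basis of $\mathbb{F}_2^n$. For the inductive step, assume $v_0(t),\ldots,v_{n-1}(t)$ form a basis and consider the unique gate $\CX(l,k)$ applied at time $t$, where necessarily $l\neq k$. By the definition of the signatures, $v_k(t+1)=v_k(t)+v_l(t)$ while $v_j(t+1)=v_j(t)$ for all $j\neq k$; in particular $v_l(t+1)=v_l(t)$ because $l\neq k$. Since we work in characteristic $2$, this relation inverts: $v_k(t)=v_k(t+1)+v_l(t+1)$, with every other old vector literally equal to its new counterpart. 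Hence each $v_j(t)$ lies in the span of $v_0(t+1),\ldots,v_{n-1}(t+1)$, so by the inductive hypothesis these $n$ new vectors span $\mathbb{F}_2^n$, and a spanning set of $n$ vectors in an $n$-dimensional space is a basis. (Equivalently, one may phrase the step as left-multiplication of the matrix of signatures by the transvection $I+E_{k,l}$, which has determinant $1$ over $\mathbb{F}_2$ and is therefore invertible.)

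I do not expect a genuine obstacle here; the argument is essentially the standard fact that elementary row operations preserve bases. The only point deserving a moment's care is that a $\CX$ gate always has distinct control and target wires ($l\neq k$) --- this is exactly what guarantees $v_l(t+1)=v_l(t)$ and makes the transition an honest elementary operation rather than a degenerate one. Everything else is routine bookkeeping.
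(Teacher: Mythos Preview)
Your proposal is correct and follows essentially the same approach as the paper: induction on $t$, standard basis at $t=0$, and the observation that a $\CX$ gate adds one vector of the tuple to another, which preserves the basis property. The paper phrases the inductive step tersely as ``adding one element of a linearly independent set to another does not affect linear independence,'' while you spell out the inverse relation (or equivalently the transvection $I+E_{k,l}$), but the content is the same.
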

\begin{proof}
    Induction on $t$.
    For $t=0$ all wire signatures form a standard basis.
    Assume that at time $t$ we apply a $\CX$ gate controlled by qubit $a$ and targeting qubit $b$.
    Therefore we have $v_b(t+1) = v_b(t) + v_a(t)$ and $v_k(t+1)=v_k(t)$ for all $k\neq b$.
    Adding one element of a linearly independent set to another does not affect linear independence --- and since the size of the set does not change, we obtain that wire signatures at time $t+1$ also form a basis of $\mathbb{F}_2^n$.
\end{proof}

Finally let us introduce a shorthand for vectors from $\mathbb{F}_2^n$: we will write $i_{n-1}i_{n-2}\ldots i_0$ for $[i_{n-1}, \ldots, i_0]\in\mathbb{F}_2^n$, e.g. $011$ will be the same as vector $e_0+e_1=[0,1,1]\in\mathbb{F}_2^3$.

\section{Problem variants}
\subsection{Exact decomposition}
This variant is also called \textbf{NPA (No Permutation Allowed)}. It requires that for every wire $k$ we have $v_k(\len(Q)) = v_k(0) = e_k$, i.e. the final $v_k$ is the same as the initial one. By solving this variant, we are able to decompose every diagonal operator exactly.

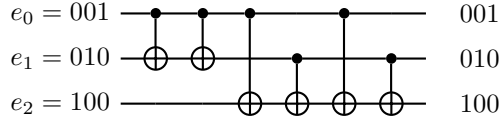
\begin{figure}[H]
\begin{quantikz}[row sep={0.6cm,between origins}, column sep=0.3cm]
\lstick{$e_0 = 001$} & \ctrl{1} & \ctrl{1} & \ctrl{2}  & \qw   & \ctrl{2}  & \qw  & \qw & \rstick{$001$} \\
\lstick{$e_1 = 010$} & \targ{}  & \targ{}  & \qw       & \ctrl{1}       & \qw  & \ctrl{1} &\qw & \rstick{$010$} \\
\lstick{$e_2 = 100$} & \qw      & \qw      & \targ{}   & \targ{}   & \targ{}  & \targ{} & \qw &\rstick{$100$} 
\end{quantikz}
\caption{One of the shortest circuits for $n = 3$ for NPA variant. The sequence of signatures generated on target wires is $[011, 010, 101, 111, 110, 100]$ --- each $v\in\mathbb{F}_2^n\setminus\{0\}$ was generated (with $001$ available before any $\CX$ gate) and the wire signatures at the end of execution were $[001, 010, 100]$.}
\end{figure}

\subsection{WPA - Wire Permutation Allowed}
This variant requires that $v_k(\len(Q)) = e_{\sigma(k)}$ for some permutation of wires $\sigma$. Circuits of this form may perform the diagonal operator exactly, changing only the logical layout of qubits, which may slightly influence the layout of gates following the diagonal operator (this influence may as well positively affect the overall gate count).

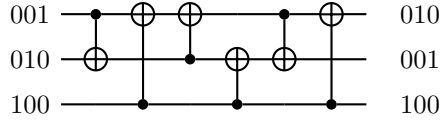
\begin{figure}[H]
\begin{quantikz}[row sep={0.6cm,between origins}, column sep=0.3cm]
\lstick{$001$} & \ctrl{1} & \targ{}  & \targ{}  & \qw      & \ctrl{1}  & \targ{}   & \qw & \rstick{$010$} \\
\lstick{$010$} & \targ{}  & \qw      & \ctrl{-1}& \targ{}  & \targ{}   & \qw       & \qw & \rstick{$001$} \\
\lstick{$100$} & \qw      & \ctrl{-2}& \qw      & \ctrl{-1}& \qw       & \ctrl{-2} & \qw & \rstick{$100$} 
\end{quantikz}
\caption{One of the shortest circuits for $n = 3$ for WPA variant. The sequence of signatures generated on target wires is $[011, 101, 110, 111, 001, 010]$ --- each $v\in\mathbb{F}_2^n\setminus\{0\}$ was generated (with $100$ available before any $\CX$ gate) and the wire signatures at the end of execution were $[010, 001, 100]$. The shortest solution to WPA and NPA on fully-connected topology have the same length, see \Cref{cor:wpa-npa-fully-conn}.}
\end{figure}

\subsection{SPA - State Permutation Allowed}

This variant requires only that every $v\in\mathbb{F}_2^n\setminus\{0\}$ is enumerated. In this variant, the diagonal operator applies the phases correctly but the final wire signatures, i.e. $\{v_k(\len(Q)): 0\leq k\leq n-1\}$ may represent different basis of $\mathbb{F}_2^n$ from the initial one.

\begin{figure}[H]
\begin{quantikz}[row sep={0.6cm,between origins}, column sep=0.3cm]
\lstick{$001$} & \ctrl{1} & \qw     & \ctrl{2} & \qw       & \qw & \rstick{$001$} \\
\lstick{$010$} & \targ{}  & \ctrl{1}& \qw      & \targ{}   & \qw & \rstick{$101$} \\
\lstick{$100$} & \qw      & \targ{} & \targ{}  & \ctrl{-1} & \qw & \rstick{$110$} 
\end{quantikz}
\caption{One of the shortest circuits for $n = 3$ for SPA variant. The sequence of signatures generated on target wires is $[011, 111, 110, 101]$ - each nontrivial $v\in\mathbb{F}_2^n\setminus\{0\}$ was enumerated and the wire signatures at the end of execution were $[001, 101, 110]$.}
\end{figure}
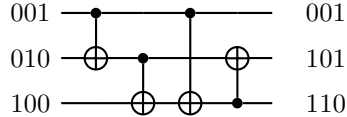

This variant may be useful in cases where we want to compute $O \cdot U \cdot O^\dagger$, where $U$ is some unitary, agnostic to the order of states. Examples include Grover's algorithm for which the oracle is a diagonal operator satisfying $O = O^\dagger$ and $U=G_n$ is Grover's diffusion operator. Thence, the algorithm may as well be implemented as a sequence of interchanging $O$ and $O^\dagger$ separated by Grover diffusion operators, which are state ordering agnostic.

\begin{figure}[H]
\begin{quantikz}[row sep={0.6cm,between origins}, column sep=0.3cm, align equals at=1]
 & \gate{\NPA_n}\qwbundle[alternate]{} & \gate{G_n}\qwbundle[alternate]{} & \gate{\NPA_n}\qwbundle[alternate]{} & \gate{G_n}\qwbundle[alternate]{} & \qwbundle[alternate]{}
\end{quantikz}
\;=%
\begin{quantikz}[row sep={0.6cm,between origins}, column sep=0.3cm, align equals at=1]
 & \gate{\SPA_n}\qwbundle[alternate]{} & \gate{G_n}\qwbundle[alternate]{} & \gate{\SPA_n^\dagger}\qwbundle[alternate]{} & \gate{G_n}\qwbundle[alternate]{} & \qwbundle[alternate]{}
\end{quantikz}
\caption{Two calls to the oracle in Grover's algorithm can be implemented equivalently as SPAs, reducing the total gate count.}
\end{figure}

From a simple observation that every solution to NPA is also a solution to WPA, and every solution to WPA is also a solution to SPA we get the following
\begin{corollary}\label{cor:variants-inequalities}
    If $\NPA(n)$,  $\WPA(n)$ and $\SPA(n)$ denote the length of shortest solutions for variants NPA, WPA, and SPA on $n$ qubits, respectively, then
    \[
        \SPA(n) \leq \WPA(n) \leq \NPA(n)\text{.}
    \]
\end{corollary}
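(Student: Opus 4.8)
\emph{Proof proposal.} The plan is to read each variant as imposing a constraint on a generic diagonal circuit and to show that these constraints get progressively weaker, so that the corresponding sets of valid circuits are nested; the inequality then follows by taking minima of $\len$ over nested sets. Concretely, let $\mathcal{N}(n)$, $\mathcal{W}(n)$, $\mathcal{S}(n)$ denote the sets of $\CX$-circuits on $n$ wires that solve NPA, WPA and SPA respectively. Recall that in every case the circuit is required to be a generic diagonal circuit, hence (by \Cref{cor:cx-lower-bound}) must enumerate every $v\in\mathbb{F}_2^n\setminus\{0\}$ via its wire signatures; the three variants differ only in the condition placed on the terminal signatures $\{v_k(\len(Q)):0\le k\le n-1\}$.

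First I would check $\mathcal{N}(n)\subseteq\mathcal{W}(n)$: if $Q$ satisfies the NPA condition $v_k(\len(Q))=e_k$ for all $k$, then $Q$ satisfies the WPA condition $v_k(\len(Q))=e_{\sigma(k)}$ with the choice $\sigma=\mathrm{id}$, the shared enumeration requirement being untouched. Next, $\mathcal{W}(n)\subseteq\mathcal{S}(n)$: a WPA solution in particular enumerates all of $\mathbb{F}_2^n\setminus\{0\}$, which is exactly the SPA requirement, and the only additional clause of SPA — that the terminal signatures form some basis of $\mathbb{F}_2^n$ — is vacuous, being automatically true by \Cref{lem:vkt-linear-independence}. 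Hence $\mathcal{N}(n)\subseteq\mathcal{W}(n)\subseteq\mathcal{S}(n)$. Since $\mathcal{N}(n)\neq\varnothing$ (for instance the three-qubit circuit shown above, with general constructions provided later in the paper), all three sets are nonempty, and passing to a minimum over a larger set can only decrease it:
\[
  \SPA(n)=\min_{Q\in\mathcal{S}(n)}\len(Q)\ \le\ \min_{Q\in\mathcal{W}(n)}\len(Q)=\WPA(n)\ \le\ \min_{Q\in\mathcal{N}(n)}\len(Q)=\NPA(n)\text{.}
\]

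I do not expect a genuine obstacle here; the proof is essentially bookkeeping. The only points requiring a little care are (i) stating explicitly that the "enumerate every nonzero vector" clause belongs to the definition of all three variants, so that the displayed inclusions are literal set containments rather than something needing an argument, and (ii) observing that the SPA terminal-basis clause contributes nothing thanks to \Cref{lem:vkt-linear-independence}, so that being a WPA solution really does imply being an SPA solution.
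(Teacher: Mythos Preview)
Your proposal is correct and follows exactly the paper's approach: the paper dispatches this corollary in a single sentence by observing that every NPA solution is a WPA solution and every WPA solution is an SPA solution, which is precisely the nesting $\mathcal{N}(n)\subseteq\mathcal{W}(n)\subseteq\mathcal{S}(n)$ you spell out. Your write-up is more explicit (in particular the remark that the SPA terminal-basis clause is vacuous by \Cref{lem:vkt-linear-independence} is a nice touch), but there is no substantive difference.
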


\section{Solution}

Solution to any variant is strongly dependent on the topology of the device. We describe solutions to a selection of topologies.

\subsection{Fully connected topology}
The first topology considered in this paper will be the fully connected topology, i.e. there exists a connection between any wires $a$ and $b$, $a, b \in \{0, 1, \ldots, n - 1\}$. Analysis of the variants in this case provides insight into the problem, allowing experimentation with enumeration of the wire signatures.

\begin{theorem}\label{thm:wpa-lower-bound}
On the fully-connected topology, any solution to WPA problem has length at least $2^n - 2$. 
\end{theorem}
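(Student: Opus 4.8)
The plan is to sharpen the counting behind \Cref{cor:cx-lower-bound}. Let $Q$ be any WPA solution, and write $B_t:=\{v_0(t),\dots,v_{n-1}(t)\}$ for the set of wire signatures at time $t$; by \Cref{lem:vkt-linear-independence} each $B_t$ is a basis of $\mathbb{F}_2^n$. Call the $\CX$ gate at time $t$ \emph{productive} if it creates a signature that does not occur at any time $\le t$, and \emph{wasteful} otherwise; the two kinds partition the $\len(Q)$ gates. Since a WPA circuit must realise every $v\in\mathbb{F}_2^n\setminus\{0\}$ as a signature, and the $n$ standard basis vectors already occur in $B_0$, each of the remaining $2^n-1-n$ nonzero vectors is created by exactly one productive gate; hence $Q$ has precisely $2^n-1-n$ productive gates. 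It therefore suffices to prove that $Q$ has at least $n-1$ wasteful gates, for then $\len(Q)\ge(2^n-1-n)+(n-1)=2^n-2$.

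To locate wasteful gates I would examine which standard basis vectors never leave the running basis. Put $T=\{\,j : e_j\in B_t\text{ for all }0\le t\le\len(Q)\,\}$ and $W=\operatorname{span}\{e_j:j\in T\}$, so $\dim W=\card{T}$. The key claim is that no signature lying in $W$, other than the $e_j$ with $j\in T$ themselves, is ever attained: if $v_k(t)\in W$ and $v_k(t)\notin\{e_j:j\in T\}$ for some $k,t$, then the $\card{T}+1$ vectors $\{e_j:j\in T\}\cup\{v_k(t)\}$ are distinct elements of the basis $B_t$, hence linearly independent, yet all lie in the $\card{T}$-dimensional space $W$ --- a contradiction. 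Thus the set of all signatures ever attained is contained in $\{e_j:j\in T\}\cup(\mathbb{F}_2^n\setminus W)$, which has size $\card{T}+2^n-2^{\card{T}}$. Because $Q$ realises all $2^n-1$ nonzero vectors, $2^n-1\le\card{T}+2^n-2^{\card{T}}$, i.e.\ $2^{\card{T}}-\card{T}\le1$, which forces $\card{T}\le1$.

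To conclude, pick any $j\notin T$; there are at least $n-1$ such indices. Since $e_j\in B_0$ and $e_j\in B_{\len(Q)}$ (the latter by the WPA condition) but $e_j\notin B_t$ for some intermediate $t$, there is a largest time $t_j<\len(Q)$ with $e_j\notin B_{t_j}$, and then $e_j\in B_{t_j+1}$. The gate at time $t_j$ changes exactly one wire's signature, and that signature becomes $e_j$, a vector already present in $B_0$; so this gate is wasteful. Distinct $j$ yield distinct times $t_j$, because a single $\CX$ changes only one wire and sets it to a single value. Hence $Q$ has at least $n-\card{T}\ge n-1$ wasteful gates, which is what we needed.

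The step I expect to be the crux is the middle one: establishing that a coordinate subspace spanned by ``frozen'' standard basis vectors can never be re-entered by any other wire signature, and packaging this into the inequality $2^{\card{T}}-\card{T}\le1$. I would also remark that full connectivity plays no role in this argument, so the bound $\WPA(n)\ge 2^n-2$ in fact holds on every topology; it is consistent with \Cref{cor:cx-lower-bound} since $2^n-2\ge 2^n-n-1$ for $n\ge1$.
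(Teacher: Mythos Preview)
Your argument is correct and reaches the same bound as the paper, but the packaging differs in an interesting way. The paper lets $T$ be the set of \emph{wires that are ever a target of some $\CX$}, shows $\card{T}\ge n-1$ via a linear-independence contradiction on two untargeted wires, and then observes that for each $k\in T$ the \emph{last} $\CX$ hitting wire $k$ must leave it in a standard-basis state $e_{\sigma(k)}$; counting those $\card{T}$ gates together with the $2^n-n-1$ gates needed for non-standard signatures gives $\len(S)\ge 2^n-2$. You instead track the complementary object: the set of indices $j$ for which $e_j$ \emph{never leaves} the running basis $B_t$. Your subspace argument with $W=\operatorname{span}\{e_j:j\in T\}$ is a clean generalisation of the paper's two-wire contradiction (their case is exactly $\card{T}=2$), and it yields the sharper structural statement $\card{T}\le 1$ directly, after which the ``return time'' $t_j$ for each $j\notin T$ supplies the $n-1$ wasteful gates. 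Both proofs are essentially the same double-count of productive versus basis-recreating gates; yours isolates the inequality $2^{\card{T}}-\card{T}\le 1$ explicitly, while the paper's is slightly shorter because it never needs to consider $\card{T}>2$. Your closing remark that the topology is irrelevant is correct and applies equally to the paper's proof.
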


\begin{proof}
Let $S$ be a solution to the WPA problem. Consider the set of wires $T$ that are a target to at least one $\CX$ gate in $S$. Suppose that $\card{T} \leq n - 2$. Then there exist wires $a, b \not \in T$, $a \neq b$.
Note that $e_a + e_b \in \mathbb{F}_2^n\setminus\{0\}$ couldn't have been enumerated: otherwise, for some wire $c$ and at some time $t$ we would have $v_a(t) = e_a$, $v_b(t) = e_b$, and $v_c(t) = e_a + e_b$, which contradicts \Cref{lem:vkt-linear-independence}.

Thus, $\card{T} \geq n - 1$. For each wire $k\in T$, consider the last target on this wire. It must have left the wire with signature $e_{\sigma(k)}$ --- this is the WPA constraint.
Let us define $B=\{e_{\sigma(k)}: k\in T\}$ and $B'=\mathbb{F}_2^n\setminus(\{e_i: 0\leq i\leq n-1\}\cup\{0\})$.
Note that $B$ and $B'$ are disjoint sets.
Moreover, for every signature $v\in B\cup B'$ there is at least one $\CX$ gate in $S$ introducing that signature at its target wire --- for $e_{\sigma(k)}\in B$ it is the last CX targetting wire $k$, and for $v\in B'$ this is true because $S$ is a correct solution and $v$ is not from a standard basis.
Therefore $\len(S)$, which is the total number of $\CX$ gates in $S$, satisfies $\len(S) \geq \card{B\cup B'} = \card{B} + \card{B'} = \card{T} + 2^n - n - 1 \geq 2^n - 2$.
\end{proof}

\begin{theorem}\label{thm:npa-fully-connected}
On the fully-connected topology, the shortest solution to NPA problem has length $2^n - 2$. 
\end{theorem}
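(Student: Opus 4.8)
The plan is to match the known lower bound with an explicit construction. The lower bound comes for free: every solution to NPA is also a solution to WPA (take $\sigma=\mathrm{id}$), so by \Cref{cor:variants-inequalities} and \Cref{thm:wpa-lower-bound} we already have $\NPA(n)\ge\WPA(n)\ge 2^n-2$. Hence the entire content of the statement is to exhibit an NPA circuit on $n$ wires of length exactly $2^n-2$, which I would do directly.

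I would build the circuit in $n-1$ consecutive stages indexed by $k=1,\ldots,n-1$, maintaining the invariant that at the start of stage $k$ every wire carries its standard-basis signature (in particular wires $0,\ldots,k-1$ carry $e_0,\ldots,e_{k-1}$ and wire $k$ carries $e_k$). Stage $k$ uses wire $k$ as its only target and wires $0,\ldots,k-1$ as controls, and drives the signature of wire $k$ once around a Hamiltonian cycle of the $k$-dimensional cube with vertex set $e_k+\mathrm{span}\{e_0,\ldots,e_{k-1}\}$, returning it to $e_k$. Concretely, let $g\colon\{0,\ldots,2^k-1\}\to\mathbb{F}_2^k$ be the binary reflected Gray code; for $j=0,\ldots,2^k-1$ I apply $\CX(b\to k)$, where $b\in\{0,\ldots,k-1\}$ is the single bit position in which $g(j)$ and $g((j+1)\bmod 2^k)$ differ. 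This flip always lies among the low $k$ positions: for $j<2^k-1$ it is the position of the lowest set bit of $j+1$, and for $j=2^k-1$ it is position $k-1$, since $g(2^k-1)$ has only its top bit set while $g(0)=0$. Each such gate is legal on the fully connected topology, and by the invariant the control wire $b$ currently carries $e_b$, so $v_k$ does trace $e_k+g(0),e_k+g(1),\ldots,e_k+g(2^k-1)$ and then back to $e_k+g(0)=e_k$. After these $2^k$ gates wire $k$ is back at $e_k$ and wires $0,\ldots,k-1$ were never touched, so the invariant is restored for stage $k+1$; and since later stages use wire $k$ only as a control, it keeps signature $e_k$ until the end. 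Thus the final signatures are the standard basis, as NPA requires, and the total gate count is $\sum_{k=1}^{n-1}2^k=2^n-2$.

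Next I would check that every nonzero $v\in\mathbb{F}_2^n$ occurs as some $v_k(t)$. Given $v\ne 0$, let $k$ be the largest index with $v_k=1$. If $k=0$ then $v=e_0$, which is present at $t=0$. If $k\ge 1$ then $v=e_k+w$ with $w\in\mathrm{span}\{e_0,\ldots,e_{k-1}\}$, and, because $g$ is a bijection onto $\mathbb{F}_2^k$, every such vector is visited by the signature of wire $k$ during stage $k$. So all $2^n-1$ nonzero signatures are enumerated; together with \Cref{lem:vkt-linear-independence} (which is automatic here) and the restored standard basis this shows the circuit is a valid NPA solution of length $2^n-2$, matching the lower bound, and the theorem follows.

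The one spot that needs genuine care is the stage-$k$ gadget: one must verify that the reflected Gray code really closes into a Hamiltonian cycle of the cube on $e_k+\mathrm{span}\{e_0,\ldots,e_{k-1}\}$ and that each of its edges corresponds to adding the signature of a wire among $0,\ldots,k-1$, which by the invariant is exactly a standard basis vector. I would also point out the degenerate base case $k=1$ explicitly, where the ``cycle'' is just $\CX(0\to 1)$ applied twice, taking $e_1\mapsto e_1+e_0\mapsto e_1$. Everything else — the gate count, the restoration of the standard basis across stages, and the leading-bit classification of nonzero vectors — is routine bookkeeping.
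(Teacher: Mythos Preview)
Your proposal is correct and takes essentially the same approach as the paper: the same lower bound via \Cref{thm:wpa-lower-bound}, and the same Gray-code construction in which wire $k$ is driven through all of $e_k+\mathrm{span}\{e_0,\ldots,e_{k-1}\}$ and returned to $e_k$ using $2^k$ $\CX$ gates targeting wire $k$. The only difference is presentational: the paper states the construction recursively (apply $\NPA_n$ to the first $n$ wires, then do the Gray-code sweep on wire $n$), whereas you unroll that recursion into the explicit stage-by-stage description; the resulting circuits are identical.
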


\begin{proof}
Since any solution to NPA is also a solution to WPA, by \Cref{thm:wpa-lower-bound} the shortest solution to NPA must have length at least $2^n - 2$.

We will now construct the solution of length $2^n - 2$ recursively. For $n = 1$ we do not place any $\CX$, and obtain solution of cost $ 0 = 2^1 - 2$. To construct solution for $n + 1$ wires, we apply the previously created solution to the first $n$ wires. Then, we enumerate $\{v\in \mathbb{F}_2^{n+1}: \sprod{e_n, v} = 1\}$ using Gray codes\footnote{Or any equivalent algorithm that is able to enumerate all bitmasks (starting and ending with 0 bitmask) by performing $2^n$ bitflips.}. To do that, we place $2^n$ additional $\CX$es in such a way, that the control qubit is placed in the corresponding Gray code bit and the target is set to the last wire.

Thus, the recursive solution to $NPA_{n+1}$ contains $(2^n - 2) + 2^n = 2 \cdot 2^n - 2 = 2^{n + 1} - 2$ $\CX$ gates. 
\end{proof}

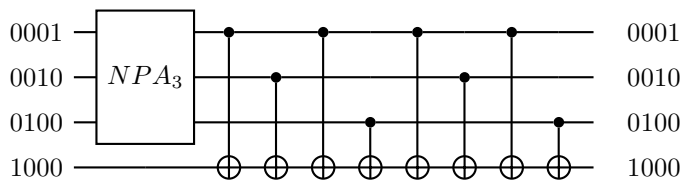
\begin{figure}[H]
\begin{quantikz}[row sep={0.6cm,between origins}, column sep=0.3cm]
\lstick{$0001$} & \gate[wires = 3]{NPA_3} & \ctrl{3} & \qw      & \ctrl{3} & \qw      & \ctrl{3} & \qw      & \ctrl{3} & \qw      & \qw & \rstick{$0001$} \\
\lstick{$0010$} &                         & \qw      & \ctrl{2} & \qw      & \qw      & \qw      & \ctrl{2} & \qw      & \qw      & \qw & \rstick{$0010$} \\
\lstick{$0100$} &                         & \qw      & \qw      & \qw      & \ctrl{1} & \qw      & \qw      & \qw      & \ctrl{1} & \qw & \rstick{$0100$} \\
\lstick{$1000$} & \qw                     & \targ{}  & \targ{}  & \targ{}  & \targ{}  & \targ{}  & \targ{}  & \targ{}  & \targ{}  & \qw & \rstick{$1000$} 
\end{quantikz}
\caption{Example solution for NPA for $n = 4$ wires. After recursive call, $\{v\in \mathbb{F}_2^{n+1}: \sprod{e_n, v} = 1\}$ is enumerated in the order of Gray codes: $[1001, 1011, 1010, 1110, 1111, 1101, 1100, 1000]$. All remaining signatures are enumerated by recursion.}
\end{figure}

\begin{corollary}\label{cor:wpa-npa-fully-conn}
    Assuming fully connected topology, the shortest solutions to WPA are of the same length as the shortest solutions to NPA, that is $2^n-2$.
\end{corollary}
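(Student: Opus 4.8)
The plan is simply to trap $\WPA(n)$ between matching lower and upper bounds, both equal to $2^n-2$, using results already established in this section. Nothing new needs to be constructed; the entire content of the corollary is that the bound of \Cref{thm:wpa-lower-bound} is attained, and we already have a witness in hand.

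First I would record the lower bound: by \Cref{thm:wpa-lower-bound}, on the fully-connected topology every solution to the WPA problem has length at least $2^n-2$, hence $\WPA(n)\geq 2^n-2$. For the upper bound, I would invoke the (already observed) fact that a solution to NPA is in particular a solution to WPA — it satisfies $v_k(\len(Q))=e_k=e_{\sigma(k)}$ with $\sigma=\mathrm{id}$ — which is exactly the inequality $\WPA(n)\leq\NPA(n)$ from \Cref{cor:variants-inequalities}. Combining this with \Cref{thm:npa-fully-connected}, which gives $\NPA(n)=2^n-2$ on the fully-connected topology, yields $\WPA(n)\leq 2^n-2$.

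Putting the two bounds together gives $\WPA(n)=2^n-2=\NPA(n)$, which is the claim. Concretely, the recursive NPA circuit built in the proof of \Cref{thm:npa-fully-connected} already serves as an explicit shortest WPA solution, so the statement about shortest solutions having the same length is immediate.

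There is essentially no obstacle here: all the real work — the linear-independence obstruction forcing $|T|\geq n-1$ and the counting argument behind \Cref{thm:wpa-lower-bound}, together with the Gray-code recursion behind \Cref{thm:npa-fully-connected} — has already been carried out. The only thing to be careful about is that the chain $2^n-2\leq\WPA(n)\leq\NPA(n)=2^n-2$ is genuinely a chain of equalities, i.e. that the hypothesis "fully-connected topology" is in force for each of the three cited results, which it is.
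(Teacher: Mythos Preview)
Your proposal is correct and matches the paper's own proof essentially line for line: both sandwich $\WPA(n)$ via $2^n-2 \leq \WPA(n) \leq \NPA(n) = 2^n-2$, citing \Cref{thm:wpa-lower-bound}, \Cref{cor:variants-inequalities}, and \Cref{thm:npa-fully-connected}. The only difference is that you spell out more of the surrounding reasoning, but the argument is the same.
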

\begin{proof}
    From \Cref{thm:wpa-lower-bound}, \Cref{cor:variants-inequalities} and \Cref{thm:npa-fully-connected} we get
    \[
        2^n-2 \leq \WPA(n) \leq \NPA(n) = 2^n-2\text{.}
    \]
\end{proof}

Additional conclusion from the above is that there is no use for WPA solutions for the fully connected architectures - the required $\CX$ count is the same as in NPA and there is no benefit to different logical order of wires after the operator implementation, since any $\CX$ following the diagonal operator can be applied to any logical wire ordering.

\begin{theorem}
On the fully-connected topology, the shortest solution to SPA problem has length $2^n - n - 1$.
\end{theorem}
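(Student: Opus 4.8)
The plan is to establish matching lower and upper bounds. The lower bound is immediate: by \Cref{cor:cx-lower-bound} any generic diagonal circuit — and in particular any solution to SPA — must enumerate all of $\mathbb{F}_2^n\setminus\{0\}$ via wire signatures, and \Cref{lem:vkt-upper-bound} shows this requires at least $2^n-n-1$ $\CX$ gates. So the real work is constructing, on the fully-connected topology, a $\CX$ circuit of exactly that length whose wire signatures sweep out every nonzero vector of $\mathbb{F}_2^n$. Equivalently, since we start from the standard basis ($n$ signatures for free) and each $\CX$ produces at most one new signature, I need a sequence of $2^n-n-1$ $\CX$ gates that each time produces a genuinely new signature, with no waste.

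First I would set up a recursion on $n$, mirroring the NPA construction but without the overhead of returning to the identity. Suppose I have an SPA circuit $S_n$ on wires $0,\dots,n-1$ of length $2^n-n-1$ that enumerates $\mathbb{F}_2^n\setminus\{0\}$; I want $S_{n+1}$ on wires $0,\dots,n$ of length $2^{n+1}-(n+1)-1 = 2^{n+1}-n-2$. Run $S_n$ on the first $n$ wires (wire $n$ untouched, still carrying $e_n$): this enumerates all nonzero $v$ with $\sprod{e_n,v}=0$, costing $2^n-n-1$ gates, and leaves the first $n$ wires holding some basis $w_0,\dots,w_{n-1}$ of the subspace $\{v:\sprod{e_n,v}=0\}$. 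It remains to enumerate the $2^n$ vectors with $\sprod{e_n,v}=1$. I would do this by a Gray-code-style walk on wire $n$: repeatedly apply $\CX(c, n)$ for a suitable control wire $c$, so wire $n$'s signature runs through $e_n + (\text{all elements of the subspace spanned by }w_0,\dots,w_{n-1})$. A Gray code on $n$ bits visits all $2^n$ subsets of $\{w_0,\dots,w_{n-1}\}$ using exactly $2^n-1$ single-bit changes starting from the empty subset; the first value $e_n$ is already present before any of these gates, so $2^n-1$ additional $\CX$ gates suffice and each yields a new signature. Total: $(2^n-n-1)+(2^n-1) = 2^{n+1}-n-2$, as required. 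The base case $n=1$ is the empty circuit of length $0=2^1-1-1$.

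I should verify two points in the recursion. One is that every gate in the Gray-code phase indeed produces a signature not seen before: the signatures on wire $n$ during this phase all lie in the coset $e_n + U$ where $U = \{v:\sprod{e_n,v}=0\}$, so they are disjoint from everything produced by $S_n$ (which lies in $U$) and from the standard basis vectors $e_0,\dots,e_{n-1}$ (also in $U$); within the phase the Gray code visits each coset element exactly once, so no repeats there either, and $e_{\sigma}$-type collisions cannot occur because by \Cref{lem:vkt-linear-independence} the other $n$ wires always hold a basis of $U$, forcing wire $n$'s signature outside $U$ at every time in this phase — in particular never equal to $e_n$ again, so even the starting value is not revisited. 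The second point is that we never need a connection outside the fully-connected assumption: the control $c$ for each Gray-code step must be some wire whose current signature is $w_j$ for the bit $j$ being toggled, and since all of $w_0,\dots,w_{n-1}$ are present on the first $n$ wires and remain there (those wires are never targeted during this phase), an appropriate control is always available and adjacent.

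The main obstacle I anticipate is purely bookkeeping: making precise that after $S_n$ finishes, the first $n$ wires carry a basis $w_0,\dots,w_{n-1}$ and that each Gray-code transition $j$ can be realized by a single $\CX$ whose control currently holds exactly $w_j$ (so the target signature changes by adding $w_j$). This follows from \Cref{lem:vkt-linear-independence} together with the fact that the Gray-code phase never modifies the first $n$ wires, but it requires carefully tracking which physical wire holds which $w_j$ over time — a finite, mechanical argument. There is no conceptual difficulty beyond that; the lower bound is already in hand and the construction is a lightweight variant of the NPA recursion already presented.
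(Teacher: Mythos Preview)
Your proof is correct and follows essentially the same approach as the paper: a recursion combining a Gray-code sweep on one wire (enumerating the coset $\{v:\sprod{e_n,v}=1\}$) with an $\SPA$ call on the remaining $n$ wires. The only difference is the order of the two phases --- the paper performs the Gray-code phase first (so the controls are the standard basis vectors $e_0,\dots,e_{n-1}$) and the recursive $\SPA$ call afterward, which sidesteps precisely the bookkeeping you flag about tracking the basis $w_0,\dots,w_{n-1}$.
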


\begin{proof}
We see that the lower bound is $2^n - n - 1$ by \Cref{cor:cx-lower-bound}.

To prove that the lower bound may in fact be obtained we provide a recursive construction similar to the NPA solution. For $n = 1$ we provide empty circuit, having $ 0 = 2^1 - 1 - 1$ $\CX$ gates. To recursively construct circuit $SPA_{n+1}$, we start with Gray codes enumeration of $\{v\in \mathbb{F}_2^{n+1}: \sprod{e_n, v} = 1\}$ except from the last $\CX$ (thus we introduce $2^n - 1$ $\CX$ gates). We then apply $SPA_n$ to the first $n$ qubits. All signatures $v$ such that $\sprod{v, e_n}=1$ have been enumerated during the Gray code phase, and all signatures $v$ such that $\sprod{v, e_n}=0$ are generated by the recursion. 

The total number of $\CX$ operations is $(2^n - 1) + (2^n - n - 1) = 2^{n + 1} - (n + 1) - 1$, which proves the induction step.
\end{proof}

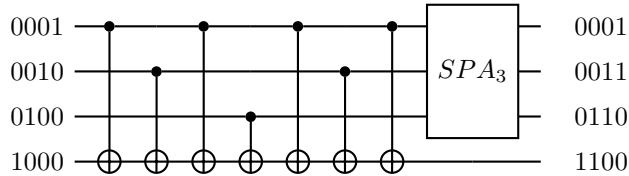
\begin{figure}[H]
\begin{quantikz}[row sep={0.6cm,between origins}, column sep=0.3cm]
\lstick{$0001$} & \ctrl{3} & \qw      & \ctrl{3} & \qw      & \ctrl{3} & \qw      & \ctrl{3} & \gate[wires = 3]{SPA_3} & \qw & \rstick{$0001$} \\
\lstick{$0010$} & \qw      & \ctrl{2} & \qw      & \qw      & \qw      & \ctrl{2} & \qw      &                         & \qw & \rstick{$0011$} \\
\lstick{$0100$} & \qw      & \qw      & \qw      & \ctrl{1} & \qw      & \qw      & \qw      &                         & \qw & \rstick{$0110$} \\
\lstick{$1000$} & \targ{}  & \targ{}  & \targ{}  & \targ{}  & \targ{}  & \targ{}  & \targ{}  & \qw                     & \qw & \rstick{$1100$} 
\end{quantikz}
\caption{Example solution for SPA for $n = 4$ wires. During the first phase, $\{v\in \mathbb{F}_2^{n+1}: \sprod{e_n, v} = 1\}$ is enumerated in the order of Gray codes: $[1001, 1011, 1010, 1110, 1111, 1101, 1100]$. All remaining signatures are enumerated by recursion.}
\end{figure}

\subsection {Linear topology}

The linear topology, which consists of connections between qubits $a$ and $b$ if and only if $\abs{a-b}=1$, is an interesting case of the diagonal decomposition problem. On the one hand, some devices have this exact topology, as the qubits are physically located on a line. On the other hand, for devices with slighly richer, but still sparse topologies, the line may often be embedded as a subgraph. This versatility of the line topology makes it useful on a large collection of graphs.

\begin{definition}[$V$ circuit]
Let $i$, $j$ be two different wires.
We define circuit $V(i,j)$ recursively in $|i - j|$ as
\[
    V(i,j) \coloneqq
    \begin{cases}
        \CX(i,j)\text{,} & \text{for } \abs{i-j} = 1\text{,}\\
        \CX(i,i+d(i,j))\circ V(i+d(i,j), j)\circ\CX(i,i+d(i,j))\text{,} & \text{otherwise,}
    \end{cases}
\]
where $d(i, j) = \sgn(j - i)$.
\end{definition}
It is worth noting that $V(i,j)$ operates only on wires between $i$ and $j$, and consists of exactly $2 \cdot \abs{i - j} - 1$ $\CX$ gates.

\begin{figure}[H]
\begin{quantikz}[row sep={0.6cm,between origins}, column sep=0.3cm]
\lstick{$00001$} & \ctrl{1} & \qw      & \qw      & \qw      & \qw      & \qw      & \ctrl{1} & \qw      & \rstick{$00001$} \\
\lstick{$00010$} & \targ{}  & \ctrl{1} & \qw      & \qw      & \qw      & \ctrl{1} & \targ{}  & \qw      & \rstick{$00010$} \\
\lstick{$00100$} & \qw      & \targ{}  & \ctrl{1} & \qw      & \ctrl{1} & \targ{}  & \qw      & \qw      & \rstick{$00100$} \\
\lstick{$01000$} & \qw      & \qw      & \targ{}  & \ctrl{1} & \targ{}  & \qw      & \qw      & \qw      & \rstick{$01000$} \\
\lstick{$10000$} & \qw      & \qw      & \qw      & \targ{}  & \qw      & \qw      & \qw      & \qw      & \rstick{$11111$} 
\end{quantikz}
\caption{Example of the $V(0, 4)$ circuit on five wires. The final signature ($v_k(7)$) differs only in the last qubit, which is now a sum of all $v_k(0)$ for wires $k=0,\ldots,4$. }
\end{figure}
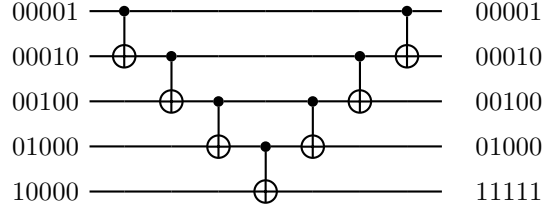

\begin{lemma}\label{lem:vij-circuit}
    Let $i \neq j$ be two wires, and let $S$ denote the set of all wires between $i$ and $j$ (inclusively).
    After applying circuit $V(i,j)$:
    \[
    v_k(\len(V(i,j))) =
    \begin{cases}
        \sum_{l\in S} v_l(0)\text{,} & \text{for $k=j$,}\\
        v_k(0)\text{,} & \text{otherwise.}
    \end{cases}
    \]
\end{lemma}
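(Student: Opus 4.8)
The plan is to prove \Cref{lem:vij-circuit} by induction on $\abs{i-j}$, mirroring the recursive definition of $V(i,j)$. For the base case $\abs{i-j}=1$, the circuit $V(i,j)$ is a single gate $\CX(i,j)$, so by the definition of wire signatures $v_j$ becomes $v_j(0)+v_i(0)$ while all other signatures are unchanged; since $S=\{i,j\}$ here, this is exactly the claimed formula.

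For the inductive step, assume $\abs{i-j}\geq 2$ and write $m \coloneqq i + d(i,j)$, the neighbor of $i$ on the side of $j$, so that $V(i,j) = \CX(i,m)\circ V(m,j)\circ\CX(i,m)$ (read right to left as circuit application). First I would track the effect of the initial $\CX(i,m)$: it sends $v_m$ to $v_m(0)+v_i(0)$ and leaves every other wire signature fixed. Next, I would apply the inductive hypothesis to $V(m,j)$, whose set of wires between $m$ and $j$ inclusively is $S' = S\setminus\{i\}$: this leaves all signatures on wires other than $j$ unchanged and sends the signature on wire $j$ to the sum of the signatures on all wires in $S'$, evaluated at the current time. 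Since at that point the signature on wire $m$ is $v_m(0)+v_i(0)$ and the signature on every other wire $l\in S'$ is still $v_l(0)$, the new signature on wire $j$ becomes $\bigl(v_m(0)+v_i(0)\bigr) + \sum_{l\in S'\setminus\{m\}} v_l(0) = \sum_{l\in S} v_l(0)$. Finally, the closing $\CX(i,m)$ sends $v_m$ back to $v_m(0)+v_i(0) + v_i(0) = v_m(0)$ — here I use that $V(m,j)$ acts only on wires between $m$ and $j$ and hence did not touch $v_i$, which is still $v_i(0)$ — and leaves all other signatures, including the one on wire $j$ just computed, untouched. Collecting these, wire $j$ ends with $\sum_{l\in S} v_l(0)$ and every other wire $k$ ends with $v_k(0)$, completing the induction.

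The main thing to be careful about is bookkeeping of "time": the inductive hypothesis is stated in terms of $v_l(0)$ for the sub-circuit $V(m,j)$, but when that sub-circuit runs inside $V(i,j)$ its initial signatures are not the global $v_l(0)$ — the wire $m$ has already been modified by the first $\CX(i,m)$. So the cleanest phrasing is to state the hypothesis as "applying $V(m,j)$ to any starting configuration of signatures sends the signature on wire $j$ to the sum over $S'$ of the starting signatures and fixes the rest", which follows from the lemma applied with relabeled initial data; equivalently one notes the map on signatures induced by a fixed $\CX$-circuit is $\mathbb{F}_2$-linear and independent of the initial assignment. I would also explicitly invoke the remark after the $V$-circuit definition that $V(m,j)$ touches only wires strictly between $m$ and $j$, which is what guarantees $v_i$ is unchanged when the final $\CX(i,m)$ runs. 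Besides that, the argument is routine substitution in $\mathbb{F}_2^n$.
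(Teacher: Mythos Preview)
Your proof is correct and follows essentially the same induction on $\abs{i-j}$ as the paper's proof, which normalizes to $0=j<i$ and inducts on $i$ but otherwise tracks the three pieces $\CX(i,m)\circ V(m,j)\circ\CX(i,m)$ identically. Your explicit remark about applying the inductive hypothesis to the already-modified starting configuration is a nice clarification the paper leaves implicit; one small slip is that $V(m,j)$ operates on wires between $m$ and $j$ \emph{inclusively}, not strictly, but since $i$ lies outside that range your conclusion that $v_i$ is untouched is unaffected.
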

\begin{proof}
    By renumbering wires appropriately, we can assume that $0 = j < i$.
    We proceed by induction on $i$.
    The case of $i=1$ is trivial.
    By the recursive definition, $V(i+1, 0)=\CX(i+1,i)\circ V(i,0)\circ \CX(i+1,i)$.
    First applied gate, $\CX(i+1,i)$, sets $v_{i}(1) = v_{i}(0) + v_{i+1}(0)$.
    Then the application of $V(i, 0)$ sets the signature of wire $0$ to
    \[
        \sum_{k=0}^{i} v_k(1) = \sum_{k=0}^{i+1} v_k(0)\text{,}
    \]
    leaving all remaining signatures intact due to the inductive assumption.
    Finally we apply the last gate $\CX(i+1,i)$, which sets the signature of $i$ to $$v_i(2i + 1) = v_i(2i) + v_{i + 1}(2i) = v_{i}(1) + v_{i+1}(0) = v_i(0)\text{.}$$
\end{proof}

We recursively define a circuit $\GRAY_n$ acting on $n$ wires.
$\GRAY_1$ is an empty circuit. Construction of $\GRAY_{n+1}$ depends on the parity of $n+1$:
\begin{itemize}
    \item If $n+1$ is even, we place $\GRAY_n$ on the last $n$ wires, apply $V(0, (n+1)/2)$, and we finish with $\GRAY_n$ on the last $n$ wires.
    \item If $n+1$ is odd, we place $\GRAY_n$ on the first $n$ wires, apply $V(n, n/2)$, and we finish with $\GRAY_n$ on the first $n$ wires.
\end{itemize}

\begin{figure}[H]
\centering
\begin{subfigure}[b]{\textwidth}
\centering
\begin{quantikz}[row sep={0.6cm,between origins}, column sep=0.3cm]
\lstick{} & \qw            & [4mm] \qw                    & \gate[wires=3]{V(0, m)}  & \qw  & \qw \\
\lstick{} & \qwbundle{m-1} & \gate[wires=3]{\GRAY_{2m-1}} &  & \gate[wires=3]{\GRAY_{2m-1}} & \qw   \\
\lstick{} & \qw            &                              &  &  & \qw  \\
\lstick{} & \qwbundle{m-1} &                              & \qw  &  & \qw 
\end{quantikz}
\caption{ Recursion formula for $\GRAY_n$ for $n=2m$ being an even number of wires.  }
\end{subfigure}

\begin{subfigure}[b]{\textwidth}
\centering
\begin{quantikz}[row sep={0.6cm,between origins}, column sep=0.3cm]
\lstick{} & \qwbundle{m}   & [4mm] \gate[wires=3]{\GRAY_{2m}} & \qw  & \gate[wires=3]{\GRAY_{2m}}  & \qw \\
\lstick{} & \qw            &                           & \gate[wires=3]{V(2m, m)} &  & \qw   \\
\lstick{} & \qwbundle{m-1} &                           &   &  & \qw  \\
\lstick{} & \qw            & \qw                       &   & \qw & \qw 
\end{quantikz}
\caption{ Recursion formula for $\GRAY_n$ for $n=2m+1$ being an odd number of wires.  }
\end{subfigure}
\caption{$\GRAY_n$ circuit.}
\end{figure}

\begin{lemma}\label{lem:gray-circuit}
Let $n$ be the number of available wires and let $m=\floor{n/2}$ denote the \emph{middle wire}.
\begin{enumerate}
    \item Circuit $GRAY_n$ enumerates all signatures $v\in\mathbb{F}_2^n$ such that $\sprod{v, e_m}=1$ --- all of them on the wire $m$,
    \item After the execution of $\GRAY_n$, signatures of all wires other than $m$ are left unchanged.
        In particular, wire $m$ will be the only wire which signature $v$ satisfies $\sprod{v, e_m}=1$.
\end{enumerate}
\end{lemma}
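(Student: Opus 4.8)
The plan is to prove both parts simultaneously by induction on $n$, following the recursive structure of $\GRAY_n$. The base case $n=1$ is immediate: $\GRAY_1$ is empty, $m=0$, the only $v$ with $\sprod{v,e_0}=1$ is $e_0=v_0(0)$, which is trivially ``enumerated'' at time $0$, and nothing changes. For the inductive step I would split on the parity of $n$ exactly as the construction does, and carry the two cases through in parallel since they are mirror images of each other under wire renumbering.

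Consider first $n=2m$ (even). The middle wire is $m$. The circuit is $\GRAY_{2m-1}$ on wires $1,\dots,2m-1$, then $V(0,m)$, then $\GRAY_{2m-1}$ on wires $1,\dots,2m-1$. I would track the signature of wire $m$ across these three blocks. Within the sub-circuits $\GRAY_{2m-1}$, the ``middle wire'' is wire number $m$ of the full circuit (the relevant block spans wires $1$ through $2m-1$, whose own middle is the $m$-th wire overall), so by the inductive hypothesis, the first $\GRAY_{2m-1}$ enumerates all $v$ supported on wires $\{1,\dots,2m-1\}$ with $\sprod{v,e_m}=1$, all appearing on wire $m$, and leaves every other wire's signature untouched; in particular wire $0$ still carries $e_0$ and wire $m$ ends carrying some fixed vector $w$ with $\sprod{w,e_m}=1$ supported on $\{1,\dots,2m-1\}$. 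Then $V(0,m)$, by \Cref{lem:vij-circuit}, adds $\sum_{l=0}^{m}v_l$ onto wire $m$ — crucially this injects the bit $e_0$ into wire $m$'s signature while leaving wires $0,\dots,m-1$ (and $m+1,\dots,2m-1$) unchanged — and since $e_0$ was absent before and present after, wire $m$'s signature genuinely changes, giving a new batch of enumerated vectors, all still satisfying $\sprod{v,e_m}=1$ and all containing $e_0$. Finally the second $\GRAY_{2m-1}$ again sweeps wire $m$ through all remaining $v$ with $\sprod{v,e_m}=1$; the key point is that this second sweep hits exactly the vectors \emph{not} containing $e_0$ among the support-$\{1,\dots,2m-1\}$-plus-$\{0\}$ family. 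To make the counting rigorous I would argue that the union of the three batches is precisely $\{v\in\mathbb{F}_2^{2m}:\sprod{v,e_m}=1\}$: the first block covers the $2^{2m-2}$ such vectors with the $e_0$-coordinate $0$; $V(0,m)$ plus the second block together cover the $2^{2m-2}$ such vectors with $e_0$-coordinate $1$ (the second $\GRAY_{2m-1}$ sweep is identical to the first except that wire $m$ starts with an extra $e_0$ term, so it enumerates exactly the $e_0$-shifted images). That these two halves are disjoint and exhaust the $2^{2m-1}$ target vectors closes part (1); part (2) follows because every sub-block and $V(0,m)$ individually leaves all wires $\ne m$ fixed, and the composition does too.

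For $n=2m+1$ (odd) the argument is the mirror image: the active wire throughout is wire $m$, the sub-blocks $\GRAY_{2m}$ act on wires $0,\dots,2m-1$ with their own middle wire being wire $m$, and $V(2m,m)$ plays the role of $V(0,m)$, injecting $e_{2m}$ into wire $m$ while fixing wires $m,m+1,\dots,2m-1$ — note here I must double-check that $V(2m,m)$ indeed leaves wire $m$'s signature alone except via the intended update, which follows directly from \Cref{lem:vij-circuit} applied with $i=2m$, $j=m$ (wire $j=m$ is the one whose signature is replaced by $\sum_{l=m}^{2m}v_l$, so I need to be careful: $V(2m,m)$ \emph{does} change wire $m$, and the surrounding $\GRAY_{2m}$ blocks operate on wire $m$ as their middle, so the bookkeeping is symmetric to the even case with the coordinate $e_{2m}$ replacing $e_0$). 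The same disjoint-exhaustion count then applies: the first $\GRAY_{2m}$ covers target vectors with $e_{2m}$-coordinate $0$, and $V(2m,m)$ followed by the second $\GRAY_{2m}$ covers those with $e_{2m}$-coordinate $1$.

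The main obstacle I anticipate is the careful verification that the three enumeration batches are genuinely disjoint and genuinely exhaustive — in particular, arguing that the second copy of the sub-circuit, run \emph{after} $V$ has altered wire $m$, enumerates precisely the $e_0$-translate (resp.\ $e_{2m}$-translate) of what the first copy enumerated, and that these two families partition $\{v:\sprod{v,e_m}=1\}$. This requires noting that a $\GRAY$ sub-block's action on wire $m$'s signature is, up to the fixed additive constant coming from wire $m$'s incoming signature, independent of that incoming value (since the sub-block's $\CX$'s involving wire $m$ add in contributions from the \emph{other} wires, whose signatures are the standard basis vectors in both runs). Once that ``translation invariance'' of the sub-block sweep is isolated as a small auxiliary observation, the cardinality argument ($2^{2m-2}+2^{2m-2}=2^{2m-1}$, resp.\ $2^{2m-1}+2^{2m-1}=2^{2m}$) finishes part (1), and part (2) is a routine ``composition of identity-on-$\{k\ne m\}$ maps'' remark.
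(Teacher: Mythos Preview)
Your proposal is correct and follows essentially the same inductive approach as the paper's proof: induction on $n$, trivial base case $n=1$, parity split, with each case handled by observing that the first $\GRAY_{n-1}$ covers the half of the target set with the extremal coordinate equal to $0$, while $V$ plus the second $\GRAY_{n-1}$ covers the complementary half with that coordinate equal to $1$; part~(2) is handled identically by noting that every $V$ in the recursion has $j=m$. You are somewhat more explicit than the paper about the ``translation invariance'' of the sub-block's sweep on wire $m$---the paper simply asserts the conclusion about the second $\GRAY_{n-1}$ without spelling out why altering wire $m$'s incoming signature merely translates the enumerated set---and this extra care is appropriate, since the inductive hypothesis as stated applies only to the standard-basis initial configuration. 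One small slip: in the even case you wrote that the second sweep ``hits exactly the vectors \emph{not} containing $e_0$'', which contradicts your own next sentence; you meant \emph{containing} $e_0$.
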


\begin{proof}
Induction on $n$.
For $n = 1$, there is only one signature to enumerate (which is $e_0$) and it is already present on the wire $\floor{n/2}=0$, so we do not need any $\CX$ gates, which proves the base case.
Let us assume the lemma is correct for $1, 2, \ldots, n - 1$. We perform the induction step based on the parity of $n$.

Let $n = 2m$. By the recursive formula for $\GRAY_n$, we first execute $\GRAY_{n-1}$ on the last $n - 1$ wires. By the inductive assumption, this circuit enumerates on the $(\floor{(n-1)/2}+1)=m$-th wire all signatures $v$ satisfying $\sprod{v, e_m} = 1$ and $\sprod{v, e_0}=0$ (as the circuit does not act on wire $0$), where ``$+ 1$'' part comes from the fact that the circuit starts at wire numbered with 1.
By \Cref{lem:vij-circuit} and from the inductive assumption, after execution of $V(0, m)$ the signature of the wire $m$ will be some $v_m(t)\in\mathbb{F}_2^n$ such that $\sprod{v_m(t), e_0}=\sprod{v_m(t), e_m}=1$, and for every wire $k\in\{1,\ldots,n-1\}\setminus\{m\}$ we have $\sprod{v_k(t), e_0}=\sprod{v_k(t), e_m}=0$.
Therefore, during execution of the final $\GRAY_{n-1}$, we enumerate (again on the $m$-th wire) all signatures $v$ such that $\sprod{v, e_0}=\sprod{v,e_m}=1$.

Now let $n = 2m + 1$. Similarly, the $\GRAY_{n-1}$ applied to the first $n - 1$ wires enumerates on $m$-th wire every signature $v$ satisfying $\sprod{v, e_m}=1$ and $\sprod{v, e_{n-1}}=0$.
By applying $V(n-1,m)$ followed by $\GRAY_{n-1}$, we enumerate all signatures $v$ such that $\sprod{v, e_m} = \sprod{v, e_{n-1}}=1$.

Thus, in both cases of parity of $n$, we enumerate on the wire $m$ all signatures $v$ satisfying $\sprod{v, e_m}=1$, which proves (1).
To show (2), it suffices to notice that all applications of circuit $V$ (including those in recursive calls) affect only wire $m=\floor{n/2}$. \end{proof}

\begin{lemma}
The circuit $\GRAY_n$ consists of $5\cdot 2^n/6 - n + (-1)^n/6 - 1/2$ $\CX$ gates.
\end{lemma}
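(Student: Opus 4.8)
The plan is to translate the recursive definition of $\GRAY_n$ into a linear recurrence for its $\CX$ count and then solve that recurrence. Write $c_n$ for $\len(\GRAY_n)$. From the construction, for $n\geq 2$ the circuit $\GRAY_n$ is built from two copies of $\GRAY_{n-1}$ together with a single $V$-circuit: when $n$ is even this $V$-circuit is $V(0,n/2)$, and when $n$ is odd it is $V(n-1,(n-1)/2)$ (here one must mind the index shift relative to the ``$\GRAY_{n+1}$'' phrasing in the definition). Since $V(i,j)$ contains exactly $2\abs{i-j}-1$ $\CX$ gates, the even case adds $2(n/2)-1=n-1$ gates and the odd case adds $2((n-1)/2)-1=n-2$ gates. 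With $c_1=0$ this gives
\[
    c_n = 2c_{n-1} + \begin{cases} n-1, & n \text{ even},\\ n-2, & n \text{ odd}, \end{cases}
\]
which I would rewrite uniformly as $c_n = 2c_{n-1} + n - \tfrac32 + \tfrac{(-1)^n}{2}$.

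Next I would solve this recurrence in the standard way: the homogeneous solutions are multiples of $2^n$, and for a particular solution I would use the ansatz $c_n = an+b+c\,(-1)^n$; comparing the coefficients of $n$, of $(-1)^n$, and the constant terms forces $a=-1$, $c=\tfrac16$, $b=-\tfrac12$. The general solution is then $c_n = A\cdot 2^n - n - \tfrac12 + \tfrac{(-1)^n}{6}$, and the initial condition $c_1=0$ pins down $A=\tfrac56$, yielding exactly $c_n = \tfrac{5\cdot 2^n}{6} - n + \tfrac{(-1)^n}{6} - \tfrac12$. An equivalent and perhaps tidier write-up is a direct induction on $n$: assume the closed form for $c_{n-1}$, substitute into the recurrence, and check that the result is the closed form with argument $n$; the base case $n=1$ evaluates to $0$.

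This argument has no genuine obstacle. The only points needing care are the parity bookkeeping — correctly matching each parity of $n$ with the $V$-circuit that $\GRAY_n$ invokes and with its exact $\CX$ count — and the elementary fraction arithmetic in the inductive step, where the $(-1)^n$ coefficient comes out as $\tfrac16$ precisely because $\tfrac12-\tfrac13=\tfrac16$.
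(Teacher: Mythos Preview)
Your proposal is correct and follows essentially the same approach as the paper: set up the recurrence from the two copies of $\GRAY_{n-1}$ plus the appropriate $V$-circuit (with the parity-dependent cost $n-1$ or $n-2$), then verify the closed form by induction. Your additional step of combining the two parity cases into a single formula and solving via an ansatz is a convenient packaging, but not a different idea.
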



\begin{proof}
The proof is a straightforward induction on $n$, with a casework on the parity of $n$. In both cases, $\GRAY_n$ consists of two $\GRAY_{n-1}$ circuits and a single $V$ circuit -- in case of even $n$, the appropriate circuit $V$ is $V(0, n / 2)$ and in case of odd $n$: $V(n - 1, (n - 1) / 2)$.

\end{proof}

\begin{theorem}\label{thm:linear-spa-upper-bound}
On linear topology, there exists a solution to SPA that uses $\frac{5}{3}\cdot 2^n + O(n^2)$ $\CX$ gates (explicitly described in the proof).
In particular, the shortest solution to SPA on linear topology has at most $\frac{5}{3}\cdot 2^n + O(n^2)$ $\CX$ gates.
\end{theorem}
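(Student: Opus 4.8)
The plan is to give a recursive construction that reduces SPA on a linear chain of $n$ wires to SPA on a linear chain of $n-1$ wires at an additional cost of $\len(\GRAY_n)+O(n)$ gates, so that the overall $\CX$-count telescopes to $\tfrac53\cdot 2^n+O(n^2)$.

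The first step is to apply $\GRAY_n$. By \Cref{lem:gray-circuit} this enumerates every signature $v\in\mathbb{F}_2^n$ with $\sprod{v,e_m}=1$ (where $m=\floor{n/2}$ is the middle wire), all of them on wire $m$; every wire $k\neq m$ is left with its original signature $e_k$, while wire $m$ is left with some signature $\sigma$ that, by \Cref{lem:vkt-linear-independence}, completes $\{e_k:k\neq m\}$ to a basis, so in particular $\sprod{\sigma,e_m}=1$. What remains is to enumerate the $2^{n-1}-1$ nonzero signatures $v$ with $\sprod{v,e_m}=0$; these are exactly the nonzero elements of $W:=\operatorname{span}\{e_k:k\neq m\}$.

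The obvious move --- recurse on the $n-1$ wires $\{k:k\neq m\}$ --- does not work directly, and this is the crux of the proof: deleting the middle wire disconnects the chain into two segments, so those $n-1$ wires no longer form a topology for which we already have a solution. I would fix this by physically evacuating the spent signature $\sigma$ to one end of the chain: applying the three-$\CX$ swaps $\SWAP(m,m-1),\SWAP(m-1,m-2),\dots,\SWAP(1,0)$ in order (a sweep of $3m=O(n)$ $\CX$ gates) moves $\sigma$ onto wire $0$ and leaves wires $1,\dots,n-1$ holding the standard basis vectors $\{e_k:k\neq m\}$ in some permuted order. Since we are solving SPA, the auxiliary signatures produced during this sweep, as well as the final signature placement, are immaterial. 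Now wires $1,\dots,n-1$ form a connected linear chain of length $n-1$ whose signatures are, after renaming coordinates, the standard basis of $\mathbb{F}_2^{n-1}$; applying the recursively constructed SPA solution to those wires therefore enumerates every nonzero element of their span, i.e.\ every nonzero element of $W$, which together with the output of $\GRAY_n$ covers all of $\mathbb{F}_2^n\setminus\{0\}$.

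It remains to bound the length. Writing $\ell(n)$ for the length of the circuit just described, with base case $\ell(1)=0$ (the empty circuit), we get $\ell(n)\le \len(\GRAY_n)+3m+\ell(n-1)$. Using the $\CX$-count $\len(\GRAY_n)=5\cdot 2^n/6-n+(-1)^n/6-1/2\le \tfrac56 2^n$ established above together with $3m\le \tfrac32 n$, telescoping gives $\ell(n)\le\sum_{k=2}^n\bigl(\tfrac56 2^k+\tfrac32 k\bigr)\le \tfrac53 2^n+O(n^2)$, as claimed. The one point needing a line of justification beyond what is above is that the swap sweep really does leave wires $1,\dots,n-1$ spanning $W$ --- equivalently, that the lone ``$e_m$-direction'' ends up isolated on wire $0$ --- which follows by tracking signatures through the three $\CX$'s of each $\SWAP$ and invoking \Cref{lem:vkt-linear-independence}.
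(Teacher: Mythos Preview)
Your proof is correct and follows essentially the same approach as the paper: apply $\GRAY_n$, sweep the spent middle wire to one end of the chain via $O(n)$ $\SWAP$s, then recurse on the remaining $n-1$ contiguous wires. The only cosmetic difference is that the paper sweeps the middle wire to the right end (wire $n-1$) and recurses on wires $0,\dots,n-2$, whereas you sweep left to wire $0$ and recurse on wires $1,\dots,n-1$; this is immaterial to both correctness and the $\tfrac53\cdot 2^n+O(n^2)$ count.
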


\begin{proof}
We construct a solution to SPA of specified size recursively. For $n = 1$, we provide empty circuit. For $n > 1$, we construct $\SPA_n$ in three phases:
\begin{enumerate}
    \item $\GRAY_n$ on all wires.
    \item a sequence of $\SWAP$ gates: $\SWAP(\floor{n/2}, \floor{n/2} + 1)$, $\SWAP(\floor{n/2} + 1, \floor{n/2} + 2)$, $\ldots$, $\SWAP(n - 2, n - 1)$. This is equivalent to performing a cyclic rotation on the last $\ceil{n/2}$ wires\footnote{
    Since we only need to ``push`` $e_{m}$ from $m$-th qubit to the last one, we could replace every $\SWAP(i, i + 1)$ gate with $\CX(i, i + 1)$ followed by $\CX(i + 1, i)$.
    This optimization reduces total number of CX gates in step (2) by $1/3$rd.
    For clarity and since it does not change the asymptotic behavior, the proof of correctness is left to the interested reader.}.
    \item $\SPA_{n-1}$ on the first $n - 1$ wires. 
\end{enumerate}
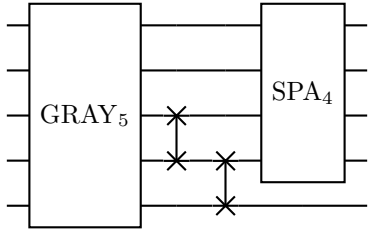
\begin{figure}[H]
\begin{quantikz}[row sep={0.6cm,between origins}, column sep=0.3cm]
\qw & \gate[wires=5]{\GRAY_5} & \qw      & \qw      & \gate[wires=4]{\SPA_4} & \qw \\
\qw &                         & \qw      & \qw      &                        & \qw \\
\qw &                         & \swap{1} & \qw      &                        & \qw \\
\qw &                         & \targX{} & \swap{1} &                        & \qw \\
\qw &                         & \qw      & \targX{} & \qw                    & \qw
\end{quantikz}
\caption{ Example of a $SPA_n$ circuit for $n=5$.  }
\end{figure}

Obviously, the $\GRAY_n$ circuit enumerates all signatures $v$ such that $\sprod{v, e_m}=1$.
Notice that due to \Cref{lem:gray-circuit} after performing step (2) the signatures of the first $n-1$ wires are exactly $e_0, \ldots, e_{m - 1}, e_{m + 1}, \ldots, e_{n-1}$.
Therefore by recursively performing $\SPA_{n-1}$ on the first $n-1$ wires, we will enumerate all signatures $v\neq 0$ such that $\sprod{v, e_m}=0$.

Thus the above circuit solves SPA problem.
Now we claim that it contains exactly\footnote{Using the above optimization of $\SWAP$s in the step (2), we arrive at the formula $5/3 \cdot 2^n -2n - ((-1)^n + 9) / 6$. } $$5 / 3 \cdot 2^n + (6n^2 -60n - 33 -7(-1)^n) / 24 \;\CX\text{ gates.}$$

The proof of the $\CX$ count formula is again a straightforward induction on $n$ with a casework depending on the parity of $n$. Each such circuit contains $\GRAY_n$, $\SPA_{n - 1}$ and an appropriate number of $\SWAP$ gates (each consisting of exactly 3 $\CX$ gates): $(n - 2) / 2$ in the even case, and $(n - 1) / 2$ in the odd case. 


Thus, the described circuit solves SPA problem with $\frac{5}{3} \cdot 2^n + O(n^2)$ $\CX$ gates.

\end{proof}

\begin{theorem}
Let $n \geq 2$ be the number of wires. Assume a circuit consisting of $\CX$ gates satisfies the following criteria:
\begin{enumerate}[(i)]
    \item the middle wire (i.e., wire $m \coloneqq \floor{n/2}$) is never a control wire,
    \item all wire signatures $v$ satisfying $\sprod{v, e_m}=1$ appear on $m$-th wire.
\end{enumerate}
Then the circuit contains at least $\frac{(5 \cdot 2^n + (-1)^n - 6n - 3)}{6}$ gates.
In other words, the circuit $\GRAY_n$ is optimal under those assumptions.
\end{theorem}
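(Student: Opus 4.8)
The plan is to prove that $\GRAY_n$ is optimal by establishing the matching lower bound through an induction on $n$ that peels off one endpoint wire at each step, so that the recursion mirrors the recursive construction of $\GRAY_n$. Write $m=\floor{n/2}$ and note that we work (as throughout this section) on the linear topology, so wire $m$ is adjacent only to wires $m-1$ and $m+1$. Combined with hypothesis (i), this means the only gates touching wire $m$ are \emph{pivot gates} $\CX(m-1,m)$ and $\CX(m+1,m)$; every other gate acts entirely within the \emph{left part} $\{0,\dots,m-1\}$ or the \emph{right part} $\{m+1,\dots,n-1\}$, and no information ever crosses wire $m$. In particular, by \Cref{lem:vkt-linear-independence}, the signature of wire $m-1$ always lies in the subspace $L$ spanned by $e_0,\dots,e_{m-1}$ and that of wire $m+1$ in the subspace $R$ spanned by $e_{m+1},\dots,e_{n-1}$, so the signature of wire $m$ is at every moment of the form $e_m+\ell+r$ with $\ell\in L$, $r\in R$.

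As a warm-up, and as the source of the dominant term, observe that hypothesis (ii) says exactly that the pair $(\ell,r)$ attached to wire $m$'s signature must run over all of $L\times R$, a set of size $2^{m}\cdot 2^{\,n-1-m}=2^{n-1}$. Each pivot gate changes this pair to a single new value (the control signature $v_{m\pm1}$ is nonzero, so the value genuinely changes), so at least $2^{n-1}-1$ pivot gates are required. This bound is tight on a fully connected device but not on a line, and the missing $\Theta(2^n)$ contribution is precisely the cost of the left and right parts, which the induction must extract.

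For the inductive step I would peel off wire $0$ when $n$ is even and wire $n-1$ when $n$ is odd; call this endpoint $q$. (This choice is forced: it is the one for which the middle wire of the remaining $n-1$ wires is still $m$.) Split the $2^{n-1}$ signatures that wire $m$ must display according to whether their $q$-th coordinate $\sprod{v,e_q}$ is $0$ or $1$, giving two families of $2^{n-2}$ signatures each. After a normalisation step (see below) one obtains from the gates not touching wire $q$ a valid instance of the problem on $n-1$ wires with middle wire $m$, which displays the ``$e_q$-free'' family and therefore costs at least $g(n-1)$ gates by the inductive hypothesis, where $g(n)=\tfrac{5\cdot 2^n+(-1)^n-6n-3}{6}$ is the gate count of $\GRAY_n$ computed above. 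Producing the ``$e_q$-involving'' family requires, disjointly, (a) a ``bridge'' of $\CX$ gates that carries the component $e_q$ along the segment from wire $q$ to wire $m$ while restoring the intermediate wires — modelled by the circuit $V(q,m)$, of length $2|q-m|-1$, which equals $n-1$ for even $n$ and $n-2$ for odd $n$ — and (b) a second disjoint $(n-1)$-instance, again costing $g(n-1)$, that sweeps the remaining $2^{n-2}$ values. Summing the three disjoint contributions gives $g(n-1)+(n-1)+g(n-1)=g(n)$ for even $n$ and $g(n-1)+(n-2)+g(n-1)=g(n)$ for odd $n$, by the recursion satisfied by $g$; this closes the induction.

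The main obstacle is making the ``two disjoint valid $(n-1)$-instances plus a bridge'' decomposition rigorous, and especially the normalisation behind it: I need that, without increasing the gate count and without breaking (i) or (ii), wire $q$ can be assumed never to be a target, so that discarding wire $q$ and its incident gates leaves the $e_q$-free signature history of wire $m$ intact. If $q$ is used as a scratch register — for instance inside a $\SWAP$-like gadget — deletion of wire $q$ corrupts its neighbour, and ruling this out (by an explicit rewriting, or by tracking signatures modulo $e_q$ and realising the quotient circuit on $n-1$ wires) is where the real work sits. A secondary point, which I expect to dispatch with the same trajectory bookkeeping as in the warm-up — applied now to the $q$-th coordinate of the signatures along the segment $q,q\pm1,\dots,m$ — is to show that the bridge genuinely needs $2|q-m|-1$ gates and is disjoint from both sub-instances.
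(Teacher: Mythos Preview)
Your warm-up bound of $2^{n-1}-1$ pivot gates is correct and is exactly inequality~(1) in the paper's proof. But the inductive step has a genuine gap that you yourself flag and do not close: the decomposition of an \emph{arbitrary} circuit into ``two disjoint valid $(n-1)$-instances plus a bridge'' is not available. Even granting the normalisation that the endpoint $q$ is never a target (which is unproven and may simply be false without increasing the gate count), projecting out the $e_q$-coordinate collapses the ``$e_q$-free'' and the ``$e_q$-involving'' families to the \emph{same} set of $2^{n-2}$ signatures. The projected circuit --- all gates not touching $q$ --- is therefore a \emph{single} valid $(n-1)$-instance, so induction yields only one copy of $g(n-1)$, not two. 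To get a second copy you would need the two families to be produced by disjoint blocks of gates, but in a general circuit the families interleave arbitrarily and share state: gates executed while the $e_q$-bit on wire $m$ equals $1$ can rearrange the left and right parts in ways the ``$e_q$-free'' phase later exploits, and vice versa. There is no rewriting that separates them without a cost argument at least as hard as the theorem itself; what your outline really shows is only that $\GRAY_n$ satisfies the recursion, not that every circuit meeting (i)--(ii) does.

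The paper avoids induction on $n$ altogether and counts gates layer by layer. For each $k\ge 2$ it tracks
\[
A(t)\coloneqq T\cap\operatorname{span}\bigl(v_{m-k+1}(t),\dots,v_{m+k-1}(t)\bigr),
\]
the set of target signatures reachable from the current inner $2k-1$ wire signatures. Only the two gates $\CX(m-k,m-k+1)$ and $\CX(m+k,m+k-1)$ can change $A(t)$, and each changes it by exactly $2^{2k-3}$ elements; since every $v\in T$ must lie in some $A(t)$, this forces
\[
\#\CX(m-k,m-k+1)+\#\CX(m+k,m+k-1)\ \ge\ 2^{\,n-2k+2}-2.
\]
Summing these over $k$ together with the pivot bound $2^{n-1}-1$ gives the stated formula directly, with no structural hypothesis on the circuit beyond (i) and (ii).
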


\begin{proof}
By $\# \CX(i, j)$ we denote the number of $\CX(i, j)$ gates in the sequence.
Observe that there are $2^{n-1}$ states that need to appear on the $m$-th wire, and the state of $m$-th wire can only be altered by applying $\CX(m-1, m)$ or $\CX(m+1, m)$. Thus,
\begin{equation}
    \# \CX(m-1, m) + \# \CX(m+1, m) \geq 2^{n-1} - 1.
\end{equation}
Let us define
\begin{equation*}
    T \coloneqq \{ v \in \mathbb{F}_2^n : \sprod{v, e_m} = 1 \}.
\end{equation*}

Now fix $k > 1$. We will process the circuit one gate at a time.
For any point in time $t$, define
\begin{equation*} 
    A(t) := T \cap \textrm{span}(v_{m-k+1}(t), \dots v_{m+k-1}(t)).
\end{equation*}
For example, at the beginning of the circuit we have
\begin{equation*}
    A(0) = \{ v \in \mathbb{F}_2^n : \sprod{v, e_m} = 1 \land \sprod{v, e_l} = 0 \textrm{ if } \abs{l-m} \geq k \}\text{.}
\end{equation*}
Let us consider a $\CX$ gate applied at time $t$.
One can verify that:
\begin{itemize}
    \item $ \card{A(t+1)}  = 2^{2k - 2}$,
    \item If the gate is $\CX(m-k, m-k+1)$ or $\CX(m+k, m+k-1)$, then $\card{A(t+1) \cap A(t)} = 2^{2k-3}$ and $\card{A(t+1) \setminus A(t)} = 2^{2k-3}$,
    \item Otherwise, $A(t+1) = A(t)$.
\end{itemize}
Note that condition (ii) can be restated as: ``For every $v \in T$, there is a time $t$ when $v_m(t) = v$''. It follows that for every $v \in T$, there is a time $t$ when $v \in A(t)$. Thus
\begin{equation*}
    \card{T} \leq \card{A(0)} + 2^{2k-3} ( \# \CX(m-k, m-k+1) + \# \CX(m+k, m+k-1) )\text{,}
\end{equation*}
which simplifies to
\begin{equation}
    2^{n - 2k + 2} - 2 \leq \# \CX(m-k, m-k+1) + \# \CX(m+k, m+k-1).
\end{equation}

We are almost done. In case of $n$ odd, adding inequalities (1) and (2) for $k=2 \dots (n-1)/2$ completes the proof. In case of $n$ even, the $0$-th wire is left without a pair. By mimicking the previous argument, one can show that $\# \CX(0, 1) \geq 2$, and again adding all inequalities yields the desired lower bound. We leave the details to the interested reader.
\end{proof}

Let us consider a \emph{reachability problem}, in which our goal is to achieve given sequence of wire signatures on $n$ wires using the minimal number of $\CX$ gates, i.e. given a basis $x_0, \ldots, x_{n-1}$ of $\mathbb{F}_2^n$, we want to find shortest circuit $Q$ such that $v_k(\len(Q))=x_k$ for all wires $k$.

\begin{theorem}\label{thm:reachability-upper-bound}
On linear topology on $n$ wires, any valid configuration of wire signatures (i,e., any $v_k(t)$s forming a basis of $\mathbb{F}_2^n$) may be reached starting from a standard basis in $O(n^2)$ $\CX$ gates.
\end{theorem}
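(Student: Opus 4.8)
The plan is to reduce the problem to a sequence of "column operations" on the matrix whose columns are the target signatures $x_0,\ldots,x_{n-1}$, and to implement each such operation on the linear topology at cost $O(n)$, yielding the overall $O(n^2)$ bound. Since the $x_k$ form a basis of $\mathbb{F}_2^n$, the change-of-basis matrix $M$ with columns $x_k$ is invertible over $\mathbb{F}_2$, hence a product of elementary matrices (transvections $E_{ij}$ adding column $j$ to column $i$). By \Cref{lem:vkt-linear-independence} the wire signatures always form a basis, and applying $\CX(a,b)$ replaces $v_b$ by $v_b+v_a$ — i.e.\ it realizes exactly one transvection on the current basis. So on a fully connected machine it would suffice to bound the number of transvections needed to turn the identity into $M$ by $O(n^2)$, which is the standard Gaussian-elimination count. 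The content of the theorem is doing this while respecting the linear adjacency constraint.

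First I would recall the $V(i,j)$ circuit from \Cref{lem:vij-circuit}, which on the linear topology adds $v_i$ into $v_j$ (and leaves everything else fixed) at cost $2|i-j|-1 = O(n)$ $\CX$ gates; its mirror image adds $v_j$ into $v_i$. So any single "add the signature currently on wire $a$ to the signature currently on wire $b$" operation is available for $O(n)$ cost regardless of $|a-b|$. This means that, up to a constant factor in the per-step cost, the linear topology is no worse than the fully connected one for this task: I can run plain Gaussian elimination "virtually" and translate each elementary step into a $V$-circuit. Concretely I would: (1) use $O(n)$ such generalized additions to clear out all but the leading entry in each column / reduce $M$ to a permutation matrix (forward elimination plus back-substitution, $O(n^2)$ total additions in the worst case is too many — so instead I would bound it properly: reducing an invertible $n\times n$ matrix over $\mathbb{F}_2$ to the identity by transvections takes $O(n^2)$ transvections, e.g.\ $O(n)$ per column over $n$ columns); (2) each transvection costs $O(n)$ on the line; (3) a leftover permutation of wires is realized by at most $O(n^2)$ adjacent transpositions, each a $\SWAP$ of $3$ $\CX$ gates, hence $O(n^2)$. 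Summing, the total is $O(n)\cdot O(n^2)$ in the naive accounting, which is $O(n^3)$, so I would need to be more careful.

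To actually get $O(n^2)$ rather than $O(n^3)$ I would instead interleave the elimination with the geometry so that each column is processed by "sweeping" information along the line only once. The clean way: process wires in order $0,1,\ldots,n-1$; maintaining the invariant that after stage $j$ the signatures on wires $0,\ldots,j-1$ already equal $x_0,\ldots,x_{j-1}$. At stage $j$, the desired vector $x_j$ is a known $\mathbb{F}_2$-combination of the current signatures $v_0,\ldots,v_{n-1}$ (which still form a basis); I want to install it on wire $j$ without disturbing wires $<j$. I would do this with a single left-to-right sweep of $\CX$ gates of the form $\CX(i,i+1)$ or $\CX(i+1,i)$ for $i$ ranging over the wires $\ge j$, each contributing $O(1)$ gates, accumulating the needed linear combination into wire $j$ and then a single right-to-left sweep to restore the auxiliary wires — total $O(n)$ $\CX$ gates for stage $j$, hence $O(n^2)$ overall. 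The $\GRAY_n$ and $\SPA_n$ constructions already in the paper are exactly this kind of bounded-width sweep, so I would model the argument on them.

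The main obstacle is the bookkeeping of "install $x_j$ on wire $j$ using only wires $\ge j$ and only adjacent gates, in $O(n)$ gates, without destroying wires $<j$." One has to check that the span of $\{v_j,\ldots,v_{n-1}\}$ together with the fixed $\{x_0,\ldots,x_{j-1}\}$ still spans $\mathbb{F}_2^n$ so that $x_j$ is reachable, and that the "restore" sweep genuinely returns the auxiliary wires to a basis-completing configuration (it need not return them to their previous values — only condition needed for later stages is that wires $\ge j+1$ plus $x_0,\ldots,x_j$ still form a basis, which holds automatically by \Cref{lem:vkt-linear-independence}). I expect the cleanest writeup treats the whole thing as Gaussian elimination on $M^{-1}$ (or $M$) with a fixed pivot order matching the line, quoting that each elementary row/column operation between wires $i$ and $i\pm1$ is one $\CX$, and that a generic invertible matrix needs $O(n^2)$ such adjacent operations by a Gaussian-elimination argument adapted to tridiagonal moves — essentially "bubble" each pivot into place. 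Routine but slightly fiddly; no conceptual difficulty beyond that.
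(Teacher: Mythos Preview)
Your final paragraph lands on essentially the paper's approach—topologically-restricted Gaussian elimination with only adjacent-row additions, each $\CX(i, i\pm 1)$ realizing one such addition, for a total of $O(n^2)$—but the proposal never actually carries this out, and the intermediate attempts contain real problems. First, a concrete error: $V(i,j)$ does not ``add $v_i$ into $v_j$''; by \Cref{lem:vij-circuit} it replaces $v_j$ by $\sum_{l\in S} v_l$, i.e.\ it adds the sum of \emph{all} intermediate signatures, not just $v_i$. Second, and more seriously, your stage-by-stage plan of installing $x_j$ on wire $j$ ``using only wires $\ge j$'' cannot work in general: after fixing wires $0,\ldots,j-1$ to $x_0,\ldots,x_{j-1}$, the target $x_j$ need not lie in the span of the current signatures on wires $j,\ldots,n-1$. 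For instance take $n=3$, $x_0=e_0$, $x_1=e_1$, $x_2=e_0+e_2$: after two stages the wires hold $e_0,e_1,e_2$ and $x_2\notin\operatorname{span}(e_2)$. Your observation that $\{x_0,\ldots,x_{j-1},w_j,\ldots,w_{n-1}\}$ spans $\mathbb{F}_2^n$ does not rescue this, since on the line you cannot borrow from wires $<j$ without routing through (and hence temporarily disturbing) them, which your single sweep does not account for.

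The paper supplies exactly what you defer as ``routine but slightly fiddly'': it works \emph{backward}, reducing the target basis $x_0,\ldots,x_{n-1}$ to the standard one by an explicit two-phase elimination (first make the signature matrix upper-triangular, then clear the remaining off-diagonal entries), and then reverses the circuit. The substance is the per-coordinate accounting: for each $e_k$ it exhibits a sequence of at most $O(n)$ \emph{adjacent} $\CX$ gates—one sweep to locate and pull down a pivot, one to fill gaps, one to clear—that removes the $e_k$-component from all wrong wires while preserving the work already done for $e_0,\ldots,e_{k-1}$. Summing over $k$ gives $O(n^2)$ with an explicit constant. Without this construction, your closing assertion that ``a generic invertible matrix needs $O(n^2)$ such adjacent operations'' is just a restatement of the theorem rather than a proof of it.
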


\begin{proof}
%
Let $x_0, x_1, \ldots, x_{n - 1}\in\mathbb{F}_2^n$ be wire signatures we want to reach. Let us construct the circuit that will reach this state in $O(n^2)$ $\CX$ gates. In order to do that, we will work backwards and reduce signatures $x_0, x_1, \ldots, x_{n-1}$ to the standard basis, performing what is essentially a topologically-restricted Gaussian elimination. We will do that in two phases:

\begin{enumerate}
    \item Guarantee that $\forall k,l, k<l : \sprod{v_l(t), e_k}=0$ and $\forall k: \sprod{v_k(t), e_k}=1$,
    \item Guarantee that $\forall k,l,k\neq l : \sprod{v_l(t), e_k}=0$ and $\forall k: \sprod{v_k(t), e_k}=1$.
\end{enumerate}

We perform the phase (1) inductively in the order of all $k$ from $0$ to $n - 1$.
Let us fix $k\in\{0,\ldots,n-1\}$ and assume that we have placed $t_0$ gates so far, i.e. that the time is $t_0$.
We claim that it is impossible for all $l\geq k$ to satisfy $\sprod{v_l(t_0), e_k}=0$.
Otherwise, the signatures of the last $n - k$ wires would span the space of dimension at most $n - k - 1$, since $\sprod{v_p(t_0), e_q}=0$ for all $p\geq k$ and $q\leq k$ ($q<k$ due to the inductive construction, and $q=k$ by our supposition).
Thus, some signatures would be linearly dependent, which is impossible.
Therefore we can select $i_1 \coloneqq \min \{ j \in \{k, \ldots, n-1\} : \sprod{v_j(t_0), e_k}=1 \}$, since the set is not empty.

Our next goal is to have $\sprod{v_k(t_1), e_k}=1$ at some time $t_1\geq t_0$.
If $i_1=k$, then we do not need to do anything and have $t_1=t_0$.
Assume that $i_1 > k$.
Then we append to our solution an array of $\CX$es of a form $[\CX(i_1-s, i_1-s-1): s\in\{0, \ldots, i_1-k-1\}]$ in the order of increasing $s$.
The last $\CX$ (placed at time $t_1=t_0+i_1-k-1$) is $\CX(k + 1, k)$.
By the definition of $i_1$, due to the execution of $\CX(i_1 - s, i_1 - s - 1)$, we have $\sprod{e_k, v_{i_1 - s-1}(t_1)} = 1$ --- in particular, $\sprod{v_k(t_1), e_k} = 1$.
This phase requires $i_1 - k$ $\CX$ gates.

Now, we are guaranteed that for $s \in \{k, k + 1, \ldots, i_1\}$ we have $\sprod{v_s(t_1), e_k} = 1$.
Next we pick $i_2 \coloneqq \max \{ j\in\{i_1,\ldots,n-1\} : \sprod{v_j(t_1), e_k}=1 \}$.
We now append the following array of $\CX$es: $[\CX(j, j + 1) : i \leq j < i_2 \land \sprod{v_{j + 1}(t_1), e_k}=0]$ in the order of increasing $j$.
After this operation, i.e. at some time $t_2\geq t_1$, we are guaranteed that for $s \in \{k, k + 1, \ldots, i_2 \}$ we have $\sprod{v_s(t_2), e_k} = 1$.
It thus suffices to apply a sequence of $\CX$es removing the bit from wires $s > k$.
We apply the following $\CX$es in a sequence: $\CX(i_2 - 1, i_2), \CX(i_2 - 2, i_2 - 1), \ldots, \CX(i_2 - (i_2 - k), i_2 - (i_2 - k - 1))$.
Now, i.e. at time $t_3\geq t_2$, we are guaranteed that for $s > k$ we have $\sprod{v_s(t_3), e_k} = 0$ and $\sprod{v_k(t_3), e_k} = 1$.

This finishes the first phase (for wire $k$) successfully.
During this phase, we introduce the following number of $\CX$es:
\begin{itemize}
    \item Between times $t_0$ and $t_1$: $i_1-k$,
    \item Between times $t_1$ and $t_2$: at most $\max(0, i_2-i_1-1)\leq i_2-i_1$,
    \item Between times $t_2$ and $t_3$: $i_2-k$.
\end{itemize}
By summing the above numbers we get that we place at most $2i_2-2k\leq 2n-2k$ $\CX$ gates for wire $k$.
The total number of $\CX$ gates introduced in phase (1) is thus at most $\sum_{k = 0}^{n - 1} 2n-2k = 2n^2 - n(n - 1) = n^2 + n$.

In the second phase we iterate through wires in the order $n - 1, \ldots, 0$.
Let us again consider a wire $k$ at a time $t_4$.
If for every wire $j<k$ we have $\sprod{v_j(t_4), e_k}=0$, then we are done.
Otherwise, we pick $i_3 \coloneqq \min \{ j\in\{0,\ldots,k-1\} : \sprod{v_j(t_4), e_k} = 1 \}$.
First we fill the gaps in a process similar to the one between times $t_1$ and $t_2$, but with $\CX$es directed upwards.
That way, after this sequence of $\CX$ operations, i.e. at time $t_5\geq t_4$, we have $\sprod{v_s(t_5), e_k} = 1$ for $s \in \{i_3, \ldots, k\}$ and $\sprod{v_s(t_4), e_k} = 0$ otherwise.
Then, we ``clear'' all wires below $k$, similarly to the way we did between times $t_2$ and $t_3$ with the exception that now all the $\CX$ gates are directed upwards.
After this step, i.e. at time $t_6\geq t_5$, we have $v_k(t_6)=e_k$ and $\sprod{v_s(t_6), e_k} = 0$ for $s \neq k$.
Note that due to the fact that all the gates in the (2) phase are directed upwards and target qubits are always less than $k$, we maintain the condition from phase (1).

This finishes the construction of the phase (2).
During this phase, we introduce the following number of $\CX$es:
\begin{itemize}
    \item Between times $t_4$ and $t_5$: at most $\max(0, k-i_3-1)\leq k-i_3$,
    \item Between times $t_5$ and $t_6$: $k-i_3$.
\end{itemize}
The total number of operations in this phase is thus bounded by $\sum_{k=0}^{n-1} 2k = n(n-1)$ --- therefore the total number of CX gates is at most $n^2 + n + n^2 - n = 2 n^2 = O(n^2)$.

To reach state $x_0, x_1, \ldots, x_{n-1}$ from the identity state, we simply reverse the order of $\CX$es.
\end{proof}

\begin{theorem}
The shortest solution to NPA problem on linear topology has $\Theta(2^n)$ $\CX$ gates.
\end{theorem}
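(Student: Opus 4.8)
The statement is the conjunction of a matching lower and upper bound, and the plan is to obtain each by assembling results already established, rather than by a fresh construction.

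\textbf{Lower bound.} Nothing new is needed. By \Cref{cor:cx-lower-bound}, any generic diagonal circuit — hence any solution to NPA, on \emph{any} topology — must enumerate all of $\mathbb{F}_2^n\setminus\{0\}$ via wire signatures, and therefore uses at least $2^n-n-1$ $\CX$ gates. So $\NPA(n)=\Omega(2^n)$ on the linear topology.

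\textbf{Upper bound.} The plan is to take the SPA solution on the line and then \emph{repair} its final wire configuration. Let $S$ be the $\SPA_n$ circuit of \Cref{thm:linear-spa-upper-bound}: it lives on the linear topology, uses $\tfrac{5}{3}2^n+O(n^2)$ $\CX$ gates, and enumerates every $v\in\mathbb{F}_2^n\setminus\{0\}$. By \Cref{lem:vkt-linear-independence}, the wire signatures at the end of $S$ form some basis $x_0,\dots,x_{n-1}$ of $\mathbb{F}_2^n$; in general this is not the standard basis, which is exactly why $S$ by itself fails the NPA constraint. Next I would invoke \Cref{thm:reachability-upper-bound} — concretely, the topologically restricted Gaussian elimination in its proof yields a linear-topology $\CX$ circuit $R$ with $O(n^2)$ gates that, started from $x_0,\dots,x_{n-1}$, ends at $e_0,\dots,e_{n-1}$. (Equivalently: run the reachability circuit for the target $x_0,\dots,x_{n-1}$ in reverse; since each $\CX$ is its own inverse, the reversed gate sequence uses the same gates, respects the same topology, and undoes the change of basis.) Let $S'$ be $S$ followed by $R$. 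Then $S'$ is on the linear topology, it ends with wire signatures $e_0,\dots,e_{n-1}$ (the NPA requirement), and it still enumerates all of $\mathbb{F}_2^n\setminus\{0\}$, because appending gates to $S$ can only enlarge the set of wire signatures that occur. Hence $S'$ is a valid NPA solution of length $\tfrac{5}{3}2^n+O(n^2)+O(n^2)=O(2^n)$, so $\NPA(n)=O(2^n)$. Combined with the lower bound, $\NPA(n)=\Theta(2^n)$.

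\textbf{Expected obstacle.} Honestly there is no serious obstacle: the theorem is a repackaging of \Cref{cor:cx-lower-bound}, \Cref{thm:linear-spa-upper-bound}, and \Cref{thm:reachability-upper-bound}. The two points I would still spell out are (i) that reversing a $\CX$ circuit sending the standard basis to a basis $B$ yields a $\CX$ circuit sending $B$ to the standard basis — which holds because the induced action on wire signatures is a product of elementary matrices, each an involution over $\mathbb{F}_2$, so the reversed product is its inverse; and (ii) that concatenation preserves the enumeration property, which is immediate. If a fully self-contained argument were wanted instead of citing the SPA and reachability theorems, one could mirror the $\SPA_n$ recursion directly — apply $\GRAY_n$, push the residual signature off the middle wire $m$ to an endpoint with $O(n)$ swaps, recurse on $n-1$ wires, then undo the swaps and the residual middle-wire signature with $O(n^2)$ further gates — but this only affects lower-order terms, not the $\Theta(2^n)$ conclusion.
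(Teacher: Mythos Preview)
Your proof is correct and follows essentially the same approach as the paper: cite \Cref{cor:cx-lower-bound} for the lower bound, then take an SPA solution on the line (the paper takes a shortest one, you take the explicit $\SPA_n$ construction---either works) and append the reverse of the $O(n^2)$ reachability circuit from \Cref{thm:reachability-upper-bound} to restore the standard basis. Your write-up is in fact a bit more explicit than the paper's about why reversing the reachability circuit does the right thing and why concatenation preserves enumeration.
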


\begin{proof}
Take any shortest solution $S$ to SPA.
By \Cref{thm:linear-spa-upper-bound}, any such solution has $\Theta(2^n)$ $\CX$ gates.
Take any shortest solution $T$ to reachability problem for signatures $v_k(\len(S))$ for all wires $k$.
By \Cref{thm:reachability-upper-bound}, we know that $\len(T) = O(n^2)$.
Now, simply extending $S$ by $\operatorname{reverse}(T)$ yields a circuit of length $\Theta(2^n) + O(n^2) = \Theta(2^n)$ that is a solution to NPA problem.
\end{proof}

\subsection{Circular topology} We will now briefly consider a variant of the problem where we allow gates
$$ \CX(j, j \pm 1 \textrm{ mod } n) \textrm{ for } 0 \leq j \leq n-1\text{,} $$
where $n$ is the number of qubits. While this topology differs from the linear topology by only 2 gates, it allows for an elegant and almost optimal solution for many $n$.

\begin{theorem} \label{circle_topo_thm}
Let $1 \leq k < n$ be coprime, and let $1 \leq l < n$ be such that $kl \equiv -1 \mod n$. Furthermore, assume that that the polynomial $f(x)\coloneqq x^n + x^l + 1$ is primitive over $\mathbb{F}_2$, i.e. $f$ is minimal and its roots have order $2^n - 1$ in the multiplicative group of $\mathbb{F}_{2^n}$. Then, a circuit consisting of gates
$$ \CX(kj \textrm{ mod } n, kj+1 \textrm{ mod } n), \textrm{ for } 0 \leq j \leq 2^n-2, $$
solves the WPA problem for $n$ qubits in circular topology.
\end{theorem}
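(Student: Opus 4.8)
The plan is to recognise this circuit as a (vectorised) maximal‑length linear feedback shift register with feedback polynomial $f$, and to read off both halves of the WPA property from the primitivity of $f$. First I would relabel the wires by the bijection $p\mapsto kp\bmod n$; since $kl\equiv -1\pmod n$, the gate applied at step $j$ becomes $\CX$ with control on wire $j\bmod n$ and target on wire $(j-l)\bmod n$, so the schedule is simply: sweep the control cyclically around the $n$ wires, each time XOR‑ing it into the wire $l$ positions behind. Let $u_j$ be the signature carried by wire $j\bmod n$ at time $j$ (the value the gate at step $j$ reads as its control). Tracking, for a fixed wire, the alternation between the step that reads it and the step that next modifies it, I would establish for $j\ge n$ the recurrence $u_j = u_{j-(n-l)} + u_{j-n}$, whose characteristic polynomial is exactly $x^n+x^l+1=f$, together with the identity that the gate at step $j$ writes onto its target the signature $u_{j+(n-l)}$ (indices extended to $\mathbb{Z}$ by the recurrence; for $j\ge l$ this is immediate, and one checks the first few steps separately).

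Second, I would package the recurrence linearly: the window $W_j=(u_j,\dots,u_{j+n-1})$, read as the columns of an $n\times n$ matrix over $\mathbb{F}_2$, satisfies $W_{j+1}=W_jC$ for the companion‑type matrix $C$ with characteristic polynomial $f$, so $W_j=W_0C^{\,j}$ and $u_j=W_0\,C^{\,j}e_0$. Because $f$ is primitive, $C$ is non‑derogatory, $\mathbb{F}_2[C]\cong\mathbb{F}_{2^n}$, and $C$ has multiplicative order $2^n-1$; hence $C^{\,j_1}-C^{\,j_2}$ is invertible whenever $0\le j_1\ne j_2\le 2^n-2$, which forces $C^{\,0}e_0,\dots,C^{\,2^n-2}e_0$ to be pairwise distinct and nonzero, i.e. to exhaust $\mathbb{F}_2^n\setminus\{0\}$. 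Once $W_0$ is known to be invertible (established in the next step), the same holds for $u_0,\dots,u_{2^n-2}$, and $(u_j)$ has period $2^n-1$. Since the signatures written during the circuit are $u_{n-l},u_{n-l+1},\dots,u_{n-l+2^n-2}$ — a full period — every nonzero vector of $\mathbb{F}_2^n$ occurs as a wire signature, so the circuit does realise a generic diagonal operator.

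Third, for the permutation condition: the final signature on wire $p$ is the one written by the last gate targeting $p$, which is the gate at step $\lambda_p$, the largest $j\le 2^n-2$ with $j\equiv p+l\pmod n$; thus it equals $u_{\lambda_p+(n-l)}$. As $p$ runs over $\{0,\dots,n-1\}$ the $\lambda_p$ occupy the last $n$ step‑indices, so the $\lambda_p+(n-l)$ occupy $n$ consecutive integers which, modulo $2^n-1$, form the window $\{-l,-l+1,\dots,n-1-l\}$. It remains to show that $(u_{-l},\dots,u_{n-1-l})$ is a permutation of the standard basis. For $0\le m<n-l$, wire $m$ has not yet been modified at time $m$, so $u_m$ equals its initial signature $e_{km\bmod n}$; for $-l\le m<0$, one step of the recurrence gives $u_m=u_{m+n}+u_{m+l}$, and since wire $m+n\in\{n-l,\dots,n-1\}$ has been modified exactly once before its own turn (picking up precisely $u_{m+l}$), one gets $u_{m+n}=e_{km\bmod n}+u_{m+l}$, so the $u_{m+l}$ terms cancel and again $u_m=e_{km\bmod n}$. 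Hence this window is a cyclic reordering of the initial configuration, in which wire $p$ carries $e_{kp\bmod n}$; as $\gcd(k,n)=1$ that is a permutation of the standard basis, so the final configuration is too (undoing the relabelling changes the resulting wire permutation only by a fixed permutation). This argument also furnishes the invertibility of $W_{-l}$, hence of $W_0=W_{-l}C^{\,l}$, that was used in the previous step.

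The main obstacle is the bookkeeping in the first step: in the cyclic read/modify schedule one must pin down exactly which earlier step last wrote the value a given gate reads, verify that no interfering modification occurs in between, and handle the first few steps before the recurrence engages — this is precisely what makes the innocuous‑looking window computation in the third step delicate rather than automatic. Everything downstream (the companion‑matrix/primitivity argument and counting $2^n-1$ consecutive terms of a period‑$(2^n-1)$ sequence) is routine.
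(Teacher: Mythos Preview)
Your argument is correct. Both your proof and the paper's rest on the same algebraic core---primitivity of $f$ makes $\mathbb{F}_2[\text{companion}]$ a field in which the companion has multiplicative order $2^n-1$, whence differences of distinct powers are invertible and the orbit of a fixed vector exhausts $\mathbb{F}_2^n\setminus\{0\}$---but the packaging differs. The paper works directly with the full $n\times n$ state matrix and a single update matrix $A=C^kB$ (cyclic shift composed with one elementary row operation), computes $\det(xI-A)=f(x)$, and then reads off both conclusions at once: $A^{2^n-1}=I$ gives the WPA endpoint immediately, and distinctness of the top rows of $I,A,\dots,A^{2^n-2}$ gives the enumeration. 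Your route instead relabels wires by $p\mapsto kp$, extracts the scalar control sequence $(u_j)$, and recognises it as a maximal-length LFSR with characteristic polynomial $f$; enumeration then follows as in the paper, but for WPA you must separately identify the final configuration as a specific window of the $u$-sequence and verify by hand that this window is a permuted standard basis (which also supplies the invertibility of $W_0$ used earlier). What your approach buys is a transparent shift-register picture and an explicit description of exactly which signature is written at each step; what the paper's matrix formulation buys is that the WPA condition drops out in one line from $A^{2^n-1}=I$ with no endpoint bookkeeping at all.
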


\begin{proof}
Let us interpret the state of the whole system (i.e., the sequence of $n$ wire signatures at a given time) as a $n \times n$ matrix, where $j$-th row is the signature of $j$-th wire.
The matrix multiplications are always performed modulo 2.

Consider a $n\times n$ matrix $A \coloneqq C^k B$, where
\[
B = 
\begin{bmatrix}
 1 &  0 & 0  & \cdots      & 0 & 0 \\
 1 &  1 & 0  & \cdots      & 0 & 0 \\
 0 &  0 & 1  & \cdots      & 0 & 0 \\
 \vdots & \vdots & \vdots & \vdots & \vdots & \vdots \\
 0 &  0 & 0  & \cdots      & 1 & 0 \\
 0 &  0 & 0  & \cdots      & 0 & 1
\end{bmatrix}
\text{and $C=$}
\begin{bmatrix}
 0 &  1 & 0 &  \cdots      & 0 & 0 \\
 0 &  0 & 1  & \cdots      & 0 & 0 \\
 0 &  0 & 0  & \cdots      & 0 & 0 \\
 \vdots & \vdots & \vdots & \vdots & \vdots & \vdots \\
 0 &  0 & 0  & \cdots      & 0 & 1 \\
 1 &  0 & 0  & \cdots      & 0 & 0
\end{bmatrix}.
\]
In other words, $B$ adds first term to the second term and $C$ performs a single cyclic shift of all terms.
Let $X$ be a matrix with rows $w_0, w_1, \dots w_{n-1}$. Then rows of $AX$ are equal to 
\[
w_k, w_{k+1}, \dots, w_{n-1}, w_0, w_0+w_1, w_2, \dots, w_{k-1}.
\]
Now, let us consider a circuit described in the theorem and fix a time $0 \leq j \leq 2^n - 1$.
By the previous observation and from a simple inductive argument it follows that the matrix $A^j$ has rows
\[
v_{kj \textrm{ mod } n}(j), v_{kj+1 \textrm{ mod } n}(j), \dots, v_{kj+n-1 \textrm{ mod } n}(j).
\]
To prove the theorem, it is thus sufficient to show that:
\begin{enumerate}[(i)]
    \item $A^{2^n - 1} = I$,
    \item For every nonzero $n$-dimensional $0-1$ vector $v$, there exists $0 \leq j \leq 2^n - 1$ such that $v$ is one of the rows of matrix $A^j$.
\end{enumerate}
We compute the characteristic polynomial\footnote{While the formula for this characteristic polynomial is not immediately obvious, the computation is left to the reader to keep the proof short.} of $A$:
\[
\det(xI - A) = x^n - x^l - 1.
\]
By assumption, it follows that $f(x) = x^n + x^l + 1$ is the minimal characteristic polynomial of matrix $A$, viewed over the field $\mathbb{F}_2$. Let $\mathbb{F}_2(\alpha)$ be an extension of $\mathbb{F}_2$ by a root of $f$. There is an injective homomorphism $\phi$ from the field $\mathbb{F}_2(\alpha)$ to the ring of $n \times n$ matrices over $\mathbb{F}_2$, given by
\begin{align*}
\phi(0) & = \textrm{zero matrix,} \\
\phi(1) & = \textrm{identity matrix,} \\
\phi(\alpha) & = A.
\end{align*}
Since $\alpha^{2^n - 1} = 1$, (i) must hold. Now, consider the matrices 
\begin{equation} \label{matrix_sequence_1}
I, A, A^2, \dots, A^{2^n - 2}.
\end{equation}
We claim that top rows of those matrices are all distinct: pick any $0 \leq i \neq j \leq 2^n - 2$. Then $A^i - A^j = \phi(\alpha^i - \alpha^j)$ is nonzero, hence invertible. This implies that the top row of $A^i - A^j$ is nonzero, and thus top rows of $A^i$ and $A^j$ differ. Since there are $2^n - 1$ nonzero $0-1$ vectors of length $n$, and $2^n - 1$ matrices in (\ref{matrix_sequence_1}), it follows that each vector appears at the top of one of the matrices. This proves (ii).
\end{proof}

\begin{corollary}
    Let $n > 1$ be such that there exists a primitive trinomial $x^n + x^l + 1$ over $\mathbb{F}_2$. Then the WPA problem for $n$ qubits in circle topology has a solution in $2^n - 1$ $\CX$ gates.
\end{corollary}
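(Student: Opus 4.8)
The plan is to derive the corollary directly from \Cref{circle_topo_thm}. That theorem, when it applies, exhibits a WPA solution in circular topology consisting of the gates $\CX(kj\bmod n,\,kj+1\bmod n)$ for $j=0,1,\dots,2^n-2$ --- exactly $2^n-1$ gates --- so the whole task is to verify its hypotheses for a suitable $k$. Given the primitive trinomial $x^n+x^l+1$ (with $0<l<n$), what we need is an integer $k$ with $1\le k<n$, $\gcd(k,n)=1$, and $kl\equiv-1\pmod n$. Such a $k$ exists if and only if $l$ is invertible modulo $n$: if $kl\equiv-1\pmod n$ then any common divisor of $l$ and $n$ divides $kl+1$ and hence $1$, forcing $\gcd(l,n)=1$; conversely, if $\gcd(l,n)=1$ we may take $k$ to be the residue of $-l^{-1}$ in $\{1,\dots,n-1\}$, which is coprime to $n$ and satisfies $kl\equiv-1\pmod n$. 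So everything comes down to the claim that \emph{for a primitive trinomial $x^n+x^l+1$ over $\mathbb{F}_2$ with $n>1$ one has $\gcd(l,n)=1$}; I expect this to be the only nontrivial point.

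To prove the claim I would argue by contradiction. If $n$ is prime the claim is immediate, since $\gcd(l,n)\in\{1,n\}$ and $l<n$. So assume $n$ is composite and $d:=\gcd(l,n)>1$; then $x^n+x^l+1=g(x^d)$ with $g(y)=y^{n/d}+y^{l/d}+1$ of degree $n/d\le n/2<n$. Let $\alpha$ be a root of the trinomial, which is irreducible (being primitive), so $\mathbb{F}_2(\alpha)=\mathbb{F}_{2^n}$ and $\alpha$ has multiplicative order $2^n-1$. Since $g(\alpha^d)=0$, the element $\alpha^d$ has degree at most $n/d$ over $\mathbb{F}_2$, hence lies in a proper subfield $\mathbb{F}_{2^m}$ with $m\mid n$ and $m\le n/2$, so $\operatorname{ord}(\alpha^d)\mid 2^m-1\le 2^{n/2}-1$. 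But $\operatorname{ord}(\alpha^d)=(2^n-1)/\gcd(d,2^n-1)\ge(2^n-1)/d\ge(2^n-1)/n$, and since $n\le 2^{n/2}$ for every composite $n$ this last quantity exceeds $2^{n/2}-1$ --- a contradiction. Hence $\gcd(l,n)=1$.

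Finally I would assemble the pieces: the claim gives $\gcd(l,n)=1$, so $k:=(-l^{-1}\bmod n)$ lies in $\{1,\dots,n-1\}$, is coprime to $n$, and satisfies $kl\equiv-1\pmod n$; the exponent associated to this $k$ in \Cref{circle_topo_thm} is exactly $l$, and the theorem's remaining hypothesis (primitivity of $x^n+x^l+1$) is our standing assumption. Applying the theorem yields a WPA solution for $n$ qubits in circular topology given by the $2^n-1$ gates $\CX(kj\bmod n,\,kj+1\bmod n)$, $0\le j\le 2^n-2$, which is exactly the claimed bound. The main obstacle is the gcd claim of the second paragraph; the rest is routine manipulation of modular inverses and counting.
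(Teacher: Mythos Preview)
Your proposal is correct and follows exactly the paper's approach: apply \Cref{circle_topo_thm} after observing that primitivity of $x^n+x^l+1$ forces $\gcd(l,n)=1$, so that a suitable $k$ exists. The paper merely asserts this coprimality fact without proof; your second paragraph supplies a valid argument for it (and the use of primitivity, not just irreducibility, is essential there --- e.g.\ $x^6+x^3+1$ is irreducible over $\mathbb{F}_2$ but not primitive, with $\gcd(3,6)=3$).
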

\begin{proof}
The primitivity of $x^n + x^l + 1$ implies that $n$, $l$ are coprime. Thus we can pick a $k$ that satisfies the assumptions of the previous theorem.
\end{proof}

From the above corollary, we can easily construct a solution to $\SPA$ with the optimal number of $\CX$ gates for certain values of the number of qubits $n$.

\begin{corollary} \label{cor:spa_circle_optimal}
    Suppose $n, k$ satisfy the conditions of Theorem \ref{circle_topo_thm}. Then a circuit consisting of gates
    $$ \CX(kj \textrm{ mod } n, kj+1 \textrm{ mod } n), \textrm{ for } 0 \leq j \leq 2^n-n-2 $$
    solves the SPA problem for $n$ qubits in circular topology.
\end{corollary}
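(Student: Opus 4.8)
The plan is to piggyback on the matrix machinery already set up in the proof of Theorem~\ref{circle_topo_thm} and to observe that the circuit in question is nothing but the length-$(2^n-n-1)$ prefix of the WPA circuit from that theorem. Since its first $j$ gates coincide with those of the full circuit, the state of the system after $j$ gates is still the matrix $A^j$, whose rows are $v_{kj\bmod n}(j),v_{kj+1\bmod n}(j),\dots,v_{kj+n-1\bmod n}(j)$, now for all $0\le j\le 2^n-n-1$. Hence the set of wire signatures that occur anywhere in this shorter circuit is exactly the union of the rows of $A^0,A^1,\dots,A^{2^n-n-1}$, and to prove that it solves SPA it suffices to show that every nonzero $v\in\mathbb{F}_2^n$ is a row of $A^j$ for some $0\le j\le 2^n-n-1$.

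The first step I would take is to upgrade the distinctness argument from the proof of Theorem~\ref{circle_topo_thm} from the top row to an arbitrary row. For $i\in\{0,\dots,n-1\}$ and $0\le a\ne b\le 2^n-2$, the difference $A^a-A^b=\phi(\alpha^a-\alpha^b)$ is invertible, because $\phi$ is an injective homomorphism of the field $\mathbb{F}_2(\alpha)$ and $\alpha^a\ne\alpha^b$; an invertible matrix has no zero row, so the $i$-th rows of $A^a$ and $A^b$ differ. Moreover each $A^j=\phi(\alpha^j)$ is invertible, hence has no zero row. Therefore, for every fixed $i$, the map $j\mapsto(\text{$i$-th row of }A^j)$ is a bijection from $\{0,\dots,2^n-2\}$ onto the $2^n-1$ nonzero vectors of $\mathbb{F}_2^n$.

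With this in hand, I would fix a nonzero $v$ and, for each $i$, let $j_i\in\{0,\dots,2^n-2\}$ be the unique index with $v$ equal to the $i$-th row of $A^{j_i}$. These $j_i$ are pairwise distinct, since $j_i=j_{i'}$ would make $v$ both the $i$-th and the $i'$-th row of the invertible matrix $A^{j_i}$, forcing $i=i'$. So $\{j_0,\dots,j_{n-1}\}$ is a set of $n$ distinct elements of $\{0,\dots,2^n-2\}$, while the ``bad part'' $\{2^n-n,2^n-n+1,\dots,2^n-2\}$ that lies outside our admissible range has only $n-1$ elements; hence at least one $j_i$ lies in $\{0,\dots,2^n-n-1\}$. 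For that $i$, $v$ equals the $i$-th row of $A^{j_i}$, i.e.\ $v=v_{kj_i+i\bmod n}(j_i)$ occurs in the circuit. As $v$ was arbitrary, the circuit enumerates all of $\mathbb{F}_2^n\setminus\{0\}$ and thus solves SPA; since it contains exactly $2^n-n-1$ gates, it also attains the lower bound of Corollary~\ref{cor:cx-lower-bound}.

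I expect the only real obstacle to be this last step: realizing that one must look at \emph{all} rows of the partial-circuit matrices (not merely the distinguished top row tracked in the WPA proof), and then counting how many of the $n$ witnessing indices $j_0,\dots,j_{n-1}$ could conceivably be forced into the discarded tail of length $n-1$. Everything else — the matrix recursion $X\mapsto AX$, the invertibility of $\phi$ on nonzero elements, and the gate count — is inherited directly from the preceding results.
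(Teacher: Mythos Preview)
Your argument is correct, but it takes a genuinely different route from the paper's. The paper's proof looks at the \emph{last} $n$ gates of the WPA circuit and argues directly that they introduce no new wire signatures: since $n$ and $k$ are coprime, each wire is the target of exactly one of those last $n$ gates, and because the full circuit ends in a WPA configuration, that last targeting must leave the wire with a standard-basis vector $e_{\sigma(j)}$---a signature already present at time $0$. Hence the final $n$ gates are redundant for enumeration and may be dropped.

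You instead argue by pigeonhole on the full $n\times(2^n-1)$ grid of rows of $A^0,\dots,A^{2^n-2}$: each nonzero $v$ appears exactly once in every row position, giving $n$ distinct time indices $j_0,\dots,j_{n-1}$ that witness $v$; since only $n-1$ of the indices $\{0,\dots,2^n-2\}$ lie in the discarded tail, at least one witness survives in the prefix. This is slick and entirely self-contained once the row-wise bijection is established, and it does not rely on identifying \emph{which} signatures occur in the tail or on the WPA endpoint condition explicitly. The paper's proof, by contrast, is shorter and more structural---it pinpoints exactly what the last $n$ signatures are (standard basis vectors) using coprimality plus the WPA constraint---whereas your argument is a clean counting trick that would generalize to any situation where one knows each row of the $A^j$ sequence hits every nonzero vector exactly once.
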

\begin{proof}
Consider the circuit from Theorem \ref{circle_topo_thm}. We claim that during last $n$ steps, no new wire signatures have been visited, and hence after removing the last $n$ CX gates, what remains is a solution to SPA problem. Indeed, since $n, k$ are coprime, every wire $j$ has been the target wire of a CX exactly once during the last $n$ steps. Since the full circuit provides a solution to WPA problem, this CX must have set $j$-th wire to one of the initial states.
\end{proof}

\begin{corollary} \label{cor:circle_not_primitive}
Let $1 \leq k < n$ be coprime, and let $1 \leq l < n$ be such that $kl \equiv -1 \mod n$. Furthermore, assume that that the polynomial $f(x)\coloneqq x^n + x^l + 1$ is irreducible over $\mathbb{F}_2$, and its roots have order $r$ in the multiplicative group of $\mathbb{F}_{2^n}$. Let $q = (2^n - 1) / r$. Then the WPA problem for $n$ qubits in circular topology can be solved in at most
$$ 2^n + (q - 1) O(n^2) $$
CX gates.
\end{corollary}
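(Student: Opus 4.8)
The plan is to iterate the circuit of Theorem~\ref{circle_topo_thm} over $q$ consecutive \emph{rounds}, inserting between consecutive rounds a short ($O(n^2)$) re-initialization of the logical basis so that each round enumerates a fresh coset of $\langle\alpha\rangle$ in $\mathbb{F}_{2^n}^{\ast}$.

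I would start from the field-isomorphism picture already set up in the proof of Theorem~\ref{circle_topo_thm}: since $f$ is irreducible it is the minimal (and characteristic) polynomial of $A=C^kB$ over $\mathbb{F}_2$, so there is an isomorphism $\phi\colon\mathbb{F}_2(\alpha)\to\mathbb{F}_2[A]$ with $\phi(\alpha)=A$, where $\mathbb{F}_2(\alpha)\cong\mathbb{F}_{2^n}$ and $\alpha$ has multiplicative order $r$; in particular $A^r=I$. The crucial point to check is that the behaviour established in Theorem~\ref{circle_topo_thm} survives re-initialization. The fixed gate sequence $\CX(kj\bmod n,\,kj+1\bmod n)$, $j=0,1,\dots$, acts on the state matrix $M$ (whose rows are the wire signatures) by a fixed product of elementary left multiplications; the inductive identity in the proof of Theorem~\ref{circle_topo_thm} shows that when $M(0)=I$ this product after $j$ steps equals $P_j^{-1}A^j$, where $P_j$ is the cyclic shift by $kj\bmod n$. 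Hence for any starting matrix the state after $j$ steps is $M(j)=P_j^{-1}A^jM(0)$; taking $M(0)=\phi(\delta)$ for $\delta\in\mathbb{F}_{2^n}$ gives $M(j)=P_j^{-1}A^j\phi(\delta)=P_j^{-1}\phi(\alpha^j\delta)$, so the signature of wire $kj\bmod n$ at step $j$ is the top row of $\phi(\alpha^j\delta)$. Writing $L(\beta)$ for the top row of $\phi(\beta)$, the map $L\colon\mathbb{F}_{2^n}\to\mathbb{F}_2^n$ is $\mathbb{F}_2$-linear and injective (a nonzero matrix in $\mathbb{F}_2[A]$ is invertible, hence has nonzero top row), and so a linear isomorphism.

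Now fix a generator $\gamma$ of $\mathbb{F}_{2^n}^{\ast}$, put $H=\langle\alpha\rangle$, so $\mathbb{F}_{2^n}^{\ast}=\bigsqcup_{i=0}^{q-1}\gamma^iH$. The construction: for $i=0,1,\dots,q-1$, first use $O(n^2)$ $\CX$ gates to drive the wire-signature matrix to $\phi(\gamma^i)$ (empty for $i=0$ since $\phi(1)=I$; for $i\ge1$ this is a transition between two invertible configurations, achievable in $O(n^2)$ $\CX$ gates by Theorem~\ref{thm:reachability-upper-bound}, whose construction uses only linear-topology edges and hence is legal on the circle, and which is symmetric up to reversal of the gate list), and then apply the $r$ gates $\CX(kj\bmod n,\,kj+1\bmod n)$ for $j=0,\dots,r-1$. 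By the previous paragraph, round $i$ causes every vector of $L(\gamma^iH)=\{L(\alpha^j\gamma^i):0\le j<r\}$ to occur as a wire signature; as $i$ runs through $0,\dots,q-1$ and $L$ is a bijection, every nonzero vector of $\mathbb{F}_2^n$ gets enumerated. After round $q-1$ the state matrix is $P_r^{-1}\phi(\gamma^{q-1})$ (using $A^r=I$); one last block of $O(n^2)$ $\CX$ gates (again Theorem~\ref{thm:reachability-upper-bound}) takes it to a permutation of the standard basis, so the resulting circuit solves WPA. The $\CX$ count is $qr=2^n-1$ from the rounds plus $q$ blocks of $O(n^2)$ gates, i.e.\ $2^n-1+q\cdot O(n^2)=2^n+(q-1)\,O(n^2)$; for $q=1$ the single round already ends at the permutation matrix $P_r^{-1}$, so no cleanup is needed and one recovers the $2^n-1$ bound.

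The step I expect to be the main obstacle is the verification in the second paragraph: one must argue carefully that splicing an arbitrary $O(n^2)$ re-initialization between rounds leaves the mechanism of Theorem~\ref{circle_topo_thm} intact --- i.e.\ that the fixed periodic $\CX$ pattern keeps realizing left-multiplication by $A$ up to the same known permutation $P_j$ independently of the basis it is started from, so that round $i$ genuinely sweeps out the coset $\gamma^iH$ through $L$. The remaining ingredients --- the coset decomposition of $\mathbb{F}_{2^n}^{\ast}$, the linearity and injectivity of $L$, and $O(n^2)$ reachability between bases --- are either immediate or already proved.
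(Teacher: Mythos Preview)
Your argument is correct and is essentially the same as the paper's: decompose $\mathbb{F}_{2^n}^{\ast}$ into the $q$ cosets of $\langle A\rangle$, reach a representative of each coset with an $O(n^2)$ reachability block, and then run $r$ steps of the periodic $\CX$ pattern to sweep out that coset via the top-row map. Your write-up is in fact slightly more complete than the paper's, since you explicitly justify that the fixed $\CX$ pattern still implements left-multiplication by $A$ (up to the cyclic permutation $P_j$) after an arbitrary re-initialization, and you append the final $O(n^2)$ cleanup needed for WPA, whereas the paper's proof text only argues the SPA enumeration.
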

\begin{proof}
Define $A$ as in proof of Theorem \ref{circle_topo_thm}. Again, the set
$$ F = \{ \epsilon_0 + \epsilon_1 A + \dots + \epsilon_{n-1} A^{n-1} : \epsilon_0, \dots, \epsilon_{n-1} \in \{0, 1\} \},
$$ 
with matrix multiplication and addition, is a field with $2^n$ elements. Pick a primitive element $G$ of this field (that is, an element with maximal order $2^n - 1$ in multiplicative group of $F$. Since there are $\varphi(2^n - 1)$ such elements, we can just pick elements of $F$ at random, until we find a primitive element). We claim the following procedure solves the SPA problem:

For $s = 0, \dots, q - 1$:
\begin{itemize}
    \item prepare the wire states corresponding to rows of $G^s$. By Theorem \ref{thm:reachability-upper-bound} this can be done in $O(n^2)$ CX gates. For $s=0$, we don't need to do anything;
    \item apply $CX(kj \text{ mod } n, kj+1 \text{ mod } n)$ for $j=0, \dots, r-1$.
\end{itemize}

Note that the matrices $I, A, A^2, \dots, A^{r-1}$ are equal (in some order) to $I, G^q, G^{2q}, \dots, G^{(r-1)q}$. Thus at $s$-th step of the aforementioned procedure, we see wire states which correspond (up to cyclic shift) to matrices $G^s, G^{s+q}, \dots G^{s+(r-1)q}$. Throughout the whole procedure, we will see wire states corresponding to every $G^j$, $0 \leq j < 2^n-1$. Reasoning as in proof \ref{circle_topo_thm}, every nonzero vector appears as a row of one of those matrices, and thus the wire state described by this vector has been visited.
\end{proof}

The constructions outlined above are applicable for a wide range of $n$ values. Let us focus on $2 \leq n \leq 20$ and SPA variant:
\begin{itemize}
    \item For $n=2, 3, 4, 5, 6, 7, 9, 10, 11, 15, 17, 18, 20, \ldots$ a primitive trinomial over $\mathbb{F}_2$ exists, and thus by Corollary \ref{cor:spa_circle_optimal} we obtain an optimal solution in $2^n - n - 1$ steps;
    \item For $n=12$, one can choose $k=7$ and $l=5$ satisfy the assumptions of Corollary \ref{cor:circle_not_primitive}, with $q=5$. For $n=14$, the same corollary can be applied, with $k=11, l=5, q=3$;
    \item For $n=8, 13, 16, 19, \ldots$ there are no irreducible trinomials over $\mathbb{F}_2$.
\end{itemize}
See, for example, \cite{ZIERLER1968541} for tables of irreducible and primitive trinomials over $\mathbb{F}_2$.

\section{Adaptive optimization}

There are two main methods to reduce the number of $\CX$ gates in some special cases. The first one analyses the angles of the operator and adjusts visited $v_k(t)$ accordingly. The second one fixes available free parameters to generate appropriate symmetries.

\subsection{Symmetry discovery}

Given a particular diagonal operator with angles $(\alpha_0, \alpha_1, \ldots, \alpha_{2^n - 1})$ with $\alpha_0 = 0$, we first compute the corresponding $\hat \theta$ as presented in \Cref{sec:hat-theta}. Define $S = \{ s \in \mathbb{F}_2^n : \theta_s = 0\}$. The condition may be replaced with $\abs{\theta_s} < \epsilon$ if we allow for approximate decomposition as well. Notice, that $P(0) = I$. In other words, we do not have to visit vector $v_k(t) = s$ corresponding to $\theta_s$ for $s \in S$.

We may thus devise a $\CX$ enumeration scheme through all vectors from $\mathbb{F}_2^n \setminus (S\cup\{0\})$, which may require less gates than the solution in the general case. In this paper, we do not provide a method for finding such schemes. Note, that method similar to $\GRAY$ circuit may be applied with some modification --- namely, we enumerate $\mathbb{F}_2^n \setminus (S\cup\{0\})$ (possibly enumerating some part of $S$ if necessary) on the middle wire using circuit $V$ with appropriate parameters.

Note that for some values of $\hat\theta$ there may be no enumeration scheme omitting entire $S$. For example, let $n = 3$, $\theta_{011} = \theta_{101} = \theta_{110} = 0$ and $\theta_{111} \neq 0$. We cannot visit $v_k(t) = 111$ not visiting at least one of $011, 101$ or $110$.
We can still benefit from it by omitting the remaining two $v_k(t)$, though.

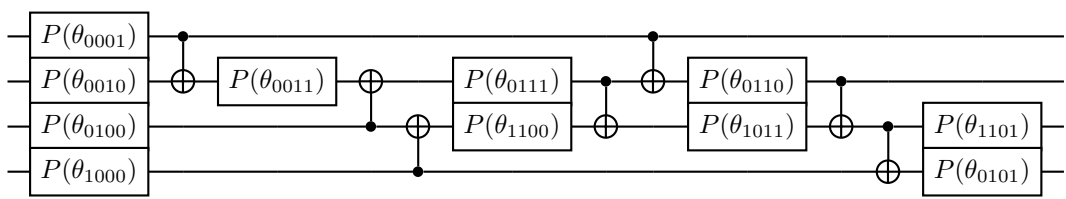
\begin{figure}[!h]
\begin{quantikz}[row sep={0.6cm,between origins}, column sep=0.3cm]
\lstick{} &\gate{P(\theta_{0001})} & \ctrl{1} & \qw                    & \qw      & \qw      & \qw      & \qw& \ctrl{1} & \qw      & \qw      & \qw & \qw & \qw \\
\lstick{} &\gate{P(\theta_{0010})} & \targ{}  &\gate{P(\theta_{0011})} & \targ{}  & \qw      & \gate{P(\theta_{0111})} &\ctrl{1} & \targ{} & \gate{P(\theta_{0110})}  & \ctrl{1} & \qw  & \qw    & \qw  \\
\lstick{} &\gate{P(\theta_{0100})} & \qw      & \qw                    & \ctrl{-1}& \targ{}  & \gate{P(\theta_{1100})} &\targ{}  & \qw   & \gate{P(\theta_{1011})}   & \targ{}  & \ctrl{1} & \gate{P(\theta_{1101})} &\qw  \\
\lstick{} &\gate{P(\theta_{1000})} & \qw      & \qw                    & \qw      &\ctrl{-1} & \qw & \qw      & \qw      & \qw  & \qw    & \targ{}  & \gate{P(\theta_{0101})} & \qw 
\end{quantikz}
\caption{Example solution for SPA on linear topology for $n = 4$ given $\theta_{1001} = \theta_{1010} = \theta_{1110} = \theta_{1111} = 0$. Corresponding $v_k(t)$ are never visited and the solution requires only 7 $\CX$ gates as opposed to 11 required to enumerate all $\mathbb{F}_2^4 \setminus \{0\}$ in the general case.}
\end{figure}

\subsection{Symmetry generation}

Given a non-empty set of basis states $S$ and an operator $D$ diagonal after restriction to $S$ (we assume that $\ket{0} \not \in S$ for clarity). Let us assume that the operator will be applied to a quantum state that is a superposition consisting solely of elements of $S$. Then, instead of implementing $D$, we may implement a diagonal operator $D'$, such that for $s \in S$, we have $D' \cdot s = D \cdot s$ and for $s' \not \in S$ we treat $\alpha_{s'}$ as a formal phase, used to compute (symbolically) $\hat \theta$. Now, we may select $\alpha_{s'}$ for those elements in such a way, that exactly $2^n - 1 - \card{S}$ elements of $\hat \theta$ are 0. There are many different parametrizations of those formal alphas yielding such highly symmetric $D'$, which may lead to different minimal $\CX$ count.

Obviously, after selecting alphas in $D'$, we execute standard symmetry discovery optimization.

\appendix
\section{Computation of \texorpdfstring{$\hat\theta$}{theta}} \label{sec:hat-theta}

To compute value of $\hat\theta$ in the problem description, we first identify $\theta_v$ (where $v \in \mathbb{F}_2^n$) with $\theta_s$ if the $s$-th phase gate in the circuit from \Cref{thm:simplify-phase-gates} is executed on the $k$-th wire at moment $t$ and $v = v_k(t)$. We also note that $P(\theta_v)$ introduces the indicated phase to every state of the computational basis $\ket b$ such that $v \cdot b = 1$.

Thus, the total phase applied to the computational basis state $\ket b$ is
$$
    \alpha_b = \sum_{\substack{v \in \mathbb{F}_2^n \\ v \cdot b = 1}} \theta_v\text{.}
$$
Thus, $\hat\theta = M^{-1} \hat\alpha$, where $\hat\alpha = (\alpha_1, \ldots, \alpha_{2^n - 1})$ and $M = [m_{ij}]$ is a $2^n - 1 \times 2^n - 1$ matrix such that $m_{ij} = i \cdot j$ where $\cdot$ is a $\mathbb F _2^n$ inner product and $i, j$ binary representation are read as vectors from $\mathbb F _2^n$.

For computational purposes, we may notice the strong similarities between matrix $M$ and Hadamard matrix $H = [h_{ij}]$ having $h_{ij} = (-1)^{i \cdot j}$, where $\cdot$ is a $\mathbb F _2^n$ inner product. Let us also introduce matrix $J \coloneqq [1]_{ij}$. Then $M$ can be obtained from matrix $\frac{1}{2} (J - H)$ by removing the first row and column. Let us write $\bar \theta = (\theta_0, \theta_1, \ldots, \theta_{2^n - 1})$ and $\bar \alpha = (\alpha_0, \alpha_1, \ldots, \alpha_{2^n - 1})$. Note that $\theta_0$ is a formal vector element and $\alpha_0 = 0$. Then, since $M\hat\theta = \hat \alpha$, we have that $\frac{1}{2} (J - H) \bar\theta = \bar \alpha$.
Multiplying this on the left hand side by $-2^{-n+1} H$, we obtain $(-U + I) \bar\theta = -2^{-n+1} H\bar \alpha$, where $U = [u_{ij}]$ and $u_{ij}$ is 1 if $i = 0$ and 0 otherwise. Elements of the left hand side's product are, starting from $k = 1$, precisely $\theta_k$. Thus, $\hat \theta$ can be obtained by performing the fast Walsh-Hadamard Transform on $\bar \alpha$, skipping the first element of the resulting vector, and dividing it by $-2^{n - 1}$. This algorithm allows to compute $\hat \theta$ in $O(n 2^n)$ time.

\section{Optimal circuits}
In the paper we presented several circuit families with simple constructions and provable scalability.
Circuits included in those families may however not be optimal for particular $n$.
In the \Cref{tab:optimal-cx_count} we present upper-bound for the $\CX$ gate count in the shortest solutions for given variants.
Table entries given as number rather than inequalities are also lower-bounds for the particular problem setup and thus are optimal.


\begin{center}
\begin{table}[h!]
\begin{tabular}{ |c||c|c|c||c|c|c||c|c|c| } 
\hline
Wire count & \multicolumn{3}{|c||}{Fully-connected} & \multicolumn{3}{|c||}{Linear} & \multicolumn{3}{|c|}{Circular} \\
& SPA & WPA & NPA & SPA & WPA & NPA & SPA & WPA & NPA \\
\hline
\hline 
2 & 1 & 2 & 2 & 1 & 2 & 2 & 1 & 2 & 2 \\  
\hline
3 & 4 & 6 & 6 & 5 & 7 & 8 & 4 & 6 & 6 \\  
\hline
4 & 11 & 14 & 14 & 14 & 17 & 18 & 11 & 14 & 16 \\  
\hline
5 & 26 & 30 & 30 & 31 & 37 & 40 & 26 & $\leq 31$ & $\leq 35$ \\  
\hline
6 & 57 & 62 & 62 & $\leq 69$ & $\leq 80$ & $\leq 86$ & 57 & $\leq 63$ & $\leq 73$ \\  
\hline
7 & 120 & 126 & 126 & $\leq 150$ & $\leq 172$ & $\leq 178$ & 120 & $\leq 127$ & $\leq 147$ \\  
\hline
8 & 247 & 254 & 254 & $\leq 327$ & $\leq 355$ & $\leq 390$ & $\leq 267$ & $\leq 299$ & $\leq 317$ \\  
\hline
\end{tabular}
\caption{Number of $\CX$ gates in the shortest solution to different problem variants on analysed topologies. Note that the fully-connected topology contains theoretical lower bounds for all other topologies.}
\label{tab:optimal-cx_count}
\end{table}
\end{center}

Some of the entries in the table above were obtained using various heuristics and optimization techniques. The constructions presented in the paper are usually worse than the optimized ones by a surprisingly small factor, with the advantage of not requiring any additional computational power. Besides the obvious cases of fully-connected topology and most of circular $\SPA$ (which are both solved optimally in the paper) and $\WPA$ (which can possibly be improved by at most one $\CX$), the constructed circuit $\SPA_8$ for the linear topology requires 409 $\CX$ gates -- which is only about 25\% worse than the best solution obtained by us using extensive heuristic optimization at 327 $\CX$es.
\newpage
\bibliographystyle{acm}
\bibliography{bibliography}

\begin{thebibliography}{10}

\bibitem{Gidney2018halvingcostof}
{\sc Gidney, C.}
\newblock Halving the cost of quantum addition.
\newblock {\em {Quantum} 2\/} (June 2018), 74.

\bibitem{Howard_2017}
{\sc Howard, M., and Campbell, E.}
\newblock Application of a resource theory for magic states to fault-tolerant
  quantum computing.
\newblock {\em Phys. Rev. Lett. 118\/} (Mar 2017), 090501.

\bibitem{Maslov_2016}
{\sc Maslov, D.}
\newblock Advantages of using relative-phase toffoli gates with an application
  to multiple control toffoli optimization.
\newblock {\em Phys. Rev. A 93\/} (Feb 2016), 022311.

\bibitem{10.5555/2011670.2011675}
{\sc M\"{o}tt\"{o}nen, M., Vartiainen, J.~J., Bergholm, V., and Salomaa, M.~M.}
\newblock Transformation of quantum states using uniformly controlled
  rotations.
\newblock {\em Quantum Info. Comput. 5}, 6 (Sept. 2005), 467–473.

\bibitem{Perez_2017}
{\sc Ruiz~Pérez, L., and Garcia-Escartin, J.~C.}
\newblock Quantum arithmetic with the quantum fourier transform.
\newblock {\em Quantum Information Processing 16\/} (04 2017).

\bibitem{Shuch_2003}
{\sc Schuch, N., and Siewert, J.}
\newblock Programmable networks for quantum algorithms.
\newblock {\em Phys. Rev. Lett. 91\/} (Jul 2003), 027902.

\bibitem{Shende_2008}
{\sc Shende, V., and Markov, I.}
\newblock On the cnot-cost of toffoli gates.
\newblock {\em Quantum Information and Computation 9\/} (03 2008).

\bibitem{Sivarajah_2020}
{\sc Sivarajah, S., Dilkes, S., Cowtan, A., Simmons, W., Edgington, A., and
  Duncan, R.}
\newblock t|ket⟩: a retargetable compiler for nisq devices.
\newblock {\em Quantum Science and Technology 6}, 1 (Nov. 2020), 014003.

\bibitem{Vedral_1996}
{\sc Vedral, V., Barenco, A., and Ekert, A.}
\newblock Quantum networks for elementary arithmetic operations.
\newblock {\em Phys. Rev. A 54\/} (Jul 1996), 147--153.

\bibitem{Welch_2014}
{\sc Welch, J., Greenbaum, D., Mostame, S., and Aspuru-Guzik, A.}
\newblock Efficient quantum circuits for diagonal unitaries without ancillas.
\newblock {\em New Journal of Physics 16}, 3 (mar 2014), 033040.

\bibitem{ZIERLER1968541}
{\sc Zierler, N., and Brillhart, J.}
\newblock On primitive trinomials (mod 2).
\newblock {\em Information and Control 13}, 6 (1968), 541--554.

\end{thebibliography}

\end{document}